\documentclass{article}

\usepackage{microtype}
\usepackage{graphicx}
\usepackage{subcaption}
\usepackage{xurl} % 让 URL 更容易自动换行

\usepackage{booktabs} % for professional tables
\usepackage{bm}

\usepackage{hyperref}
\usepackage{makecell}
\usepackage{multirow} 
\usepackage{url}
\usepackage{breakurl}

\usepackage[preprint]{icml2026}

\usepackage{amsmath}
\usepackage{amssymb}
\usepackage{mathtools}
\usepackage{amsthm}

\usepackage{tabularx}
\usepackage{hhline}
\newcolumntype{Y}{>{\centering\arraybackslash}X}
\newcolumntype{L}[1]{>{\raggedright\arraybackslash}p{#1}}
\usepackage[capitalize,noabbrev]{cleveref}

\theoremstyle{plain}
\newtheorem{theorem}{Theorem}[section]

\newtheorem{lemma}[theorem]{Lemma}

\theoremstyle{definition}
\newtheorem{definition}[theorem]{Definition}

\theoremstyle{remark}
\newtheorem{remark}[theorem]{Remark}

\usepackage[textsize=tiny]{todonotes}

\icmltitlerunning{Differentially Private and Communication Efficient LLM Split Inference}

\begin{document}

\twocolumn[
\icmltitle{Differentially Private and Communication Efficient Large Language Model Split Inference via Stochastic Quantization and Soft Prompt}

% It is OKAY to include author information, even for blind
% submissions: the style file will automatically remove it for you
% unless you've provided the [accepted] option to the icml2025
% package.

% List of affiliations: The first argument should be a (short)
% identifier you will use later to specify author affiliations
% Academic affiliations should list Department, University, City, Region, Country
% Industry affiliations should list Company, City, Region, Country

% You can specify symbols, otherwise they are numbered in order.
% Ideally, you should not use this facility. Affiliations will be numbered
% in order of appearance and this is the preferred way.
\icmlsetsymbol{equal}{*}

\begin{icmlauthorlist}
\icmlauthor{Yujie Gu}{yyy}
\icmlauthor{Richeng Jin}{yyy}
\icmlauthor{Xiaoyu Ji}{yyy}
\icmlauthor{Yier Jin}{xxx}
\icmlauthor{Wenyuan Xu}{yyy}

% \icmlauthor{Firstname6 Lastname6}{sch,yyy,comp}
% \icmlauthor{Firstname7 Lastname7}{comp}
% %\icmlauthor{}{sch}
% \icmlauthor{Firstname8 Lastname8}{sch}
% \icmlauthor{Firstname8 Lastname8}{yyy,comp}
%\icmlauthor{}{sch}
%\icmlauthor{}{sch}
\end{icmlauthorlist}

\icmlaffiliation{yyy}{Zhejiang University, Hangzhou, China}
\icmlaffiliation{xxx}{University of Science and Technology of China, Anhui, China}
% \icmlaffiliation{sch}{School of ZZZ, Institute of WWW, Location, Country}

\icmlcorrespondingauthor{Richeng Jin}{richeng\_jin@zju.edu.cn}
% \icmlcorrespondingauthor{Firstname2 Lastname2}{first2.last2@www.uk}

% You may provide any keywords that you
% find helpful for describing your paper; these are used to populate
% the "keywords" metadata in the PDF but will not be shown in the document
\icmlkeywords{Machine Learning, ICML}

\vskip 0.3in
]
% this must go after the closing bracket ] following \twocolumn[ ...

% This command actually creates the footnote in the first column
% listing the affiliations and the copyright notice.
% The command takes one argument, which is text to display at the start of the footnote.
% The \icmlEqualContribution command is standard text for equal contribution.
% Remove it (just {}) if you do not need this facility.

\printAffiliationsAndNotice{}  % leave blank if no need to mention equal contribution
% \printAffiliationsAndNotice{\icmlEqualContribution} % otherwise use the standard text.

\begin{abstract}
Large Language Models (LLMs) have achieved remarkable performance and received significant research interest. The enormous computational demands, however, hinder the local deployment on devices with limited resources. The current prevalent LLM inference paradigms require users to send queries to the service providers for processing, which raises critical privacy concerns. Existing approaches propose to allow the users to obfuscate the token embeddings before transmission and utilize local models for denoising. Nonetheless, transmitting the token embeddings and deploying local models may result in excessive communication and computation overhead, preventing practical implementation. In this work, we propose \textbf{DEL}, a framework for \textbf{D}ifferentially private and communication \textbf{E}fficient \textbf{L}LM  split inference. More specifically, an embedding projection module and a differentially private stochastic quantization mechanism are proposed to reduce the communication overhead in a privacy-preserving manner. To eliminate the need for local models, we adapt soft prompt at the server side to compensate for the utility degradation caused by privacy. To the best of our knowledge, this is the first work that utilizes soft prompt to improve the trade-off between privacy and utility in LLM inference, and extensive experiments on text generation and natural language understanding benchmarks demonstrate the effectiveness of the proposed method.

\end{abstract}

\section{Introduction}
\label{introduction}

Large Language Models (LLMs) have demonstrated remarkable capabilities in natural language processing (NLP), achieving state-of-the-art performance in tasks such as text understanding, reasoning, and generation \cite{brown2020language,zhang2022opt,touvron2023}. Due to their massive scale and resource demands, LLMs are typically deployed as online inference services \cite{LLMSurvey}. However, this service-oriented paradigm raises significant privacy concerns, as user queries often contain sensitive personal or organizational information, including names, addresses, medical records, or trade secrets, which can be exposed to untrusted service providers or intercepted during transmission, as illustrated in Figure \ref{fig:pri_lek}.

\begin{figure}[t]
    \centering
    \includegraphics[width=1\linewidth]{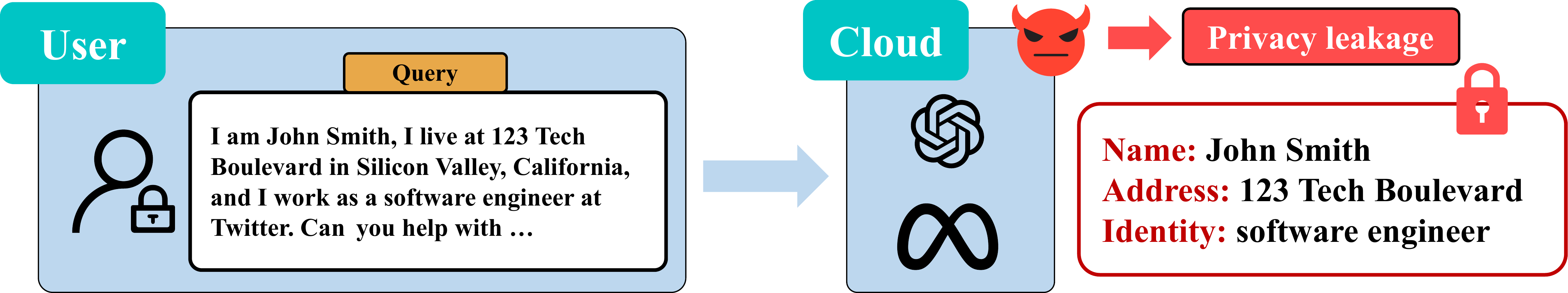}
    \caption{Illustration of privacy leakage during the LLM inference.}
    \label{fig:pri_lek}
    \vspace{-0.2in}
\end{figure}

To address these risks, a variety of privacy-preserving inference methods have been explored. Some works leverage cryptographic techniques such as homomorphic encryption (HE) \cite{liu2023llms,hou2023ciphergpt}. Although HE ensures strong privacy by allowing computation over encrypted inputs, the prohibitive computational overhead renders it impractical for modern transformer-based LLMs. Another prominent line of research is based on perturbation mechanisms, which introduce carefully calibrated noise to user inputs or intermediate representations to achieve differential privacy (DP) guarantees \cite{du2023dpforward,mai2024split}. For example, Split-and-Denoise (SnD) \cite{mai2024split} executes only the token embedding layer of the LLM on the user side and delegates all subsequent layers to the LLM server, thereby adhering to the split inference paradigm \cite{He2019MIACollabInference,liu2024mitigating}, which preserves the privacy of raw text by transmitting only intermediate representations. 
% However, recent research on embedding inversion attacks (EIA) \cite{chen2024text,wan2024information} has revealed that 
Considering that sensitive information may be recovered from the embeddings \cite{chen2024text,wan2024information}, SnD further applies calibrated noise to the token embeddings to ensure local DP. To restore the utility, a pretrained denoising model is employed on the user side to denoise the noisy output from the server. Nonetheless, SnD is designed for relatively simple semantic classification tasks and is not suitable for text generation scenarios. Moreover, the high dimensionality of perturbed token embeddings leads to substantial communication overhead. For instance, a sequence of 1024 tokens in Llama3-8B can incur over 16~MB of communication overhead. To enable private open-ended text generation, InferDPT \cite{tong2025inferdpt} perturbs each token in the input under local DP and employs a local lightweight LLM to extract coherent generations from the perturbed outputs of the LLM server. However, in practice, the local LLM (e.g., Llama2-7B-4bit \cite{touvron2023}) incurs substantial computational and memory costs on the user side. Moreover, the overall performance largely depends on the capability of the local models, which weakens the role of the powerful LLM service provider.

In this work, similar to SnD \cite{mai2024split}, we consider that the users privatize the token embeddings before sending them to the server.
% to protect privacy in the latent space by converting sensitive text into embeddings through a local token embedding layer on the user side. 
% However, transmitting these high-dimensional floating-point perturbed embeddings remains prohibitively costly \cite{chen2022poisson}; for instance, a sequence of 1024 tokens in Llama3-8B can incur over 16~MB of communication overhead. 
% \textcolor{blue}{Since high-dimensional token embeddings incur substantial communication cost and DP suffers from the curse of dimensionality \cite{liu2021highdimdp}, requiring stronger noise to achieve a fixed privacy level and thereby degrading utility, 
To overcome the curse of dimensionality, we first project the embeddings into a lower-dimensional latent space via a pretrained encoder, which reduces communication cost and improves the privacy–utility trade-off. Subsequently, a differentially private stochastic $n$-bit quantization mechanism is proposed to encode the latent representation on the user side, further reducing communication cost while providing local differential privacy. On the server side, the received representations are reconstructed to the original embedding dimensionality of the LLM via a paired pretrained decoder before being processed by the LLM.
% This process not only further decreases communication cost but also inherently provides local differential privacy by exploiting the stochasticity and information loss introduced by compression. 
To mitigate the utility degradation caused by privacy, instead of deploying local denoising models that may introduce prohibitive computational burden to resource-constrained users, we exploit soft prompt at the server side during inference. More specifically, we train soft prompt and prepend it to the privatized token embeddings shared by the users. It is worth mentioning that despite the application of soft prompt in various LLM tasks \cite{liu2024can,xu2024soft}, to the best of our knowledge, this is the first work that demonstrates its effectiveness in improving the privacy-utility tradeoff for LLM inference. 
% we pretrain soft prompt on the server side and prepend them to the privacy-preserving user embeddings during inference, thereby improving the LLM’s performance under DP guarantees.
% Leveraging the post-processing property of DP \cite{zhu2022postprocessing,dong2022gaussian}, once calibrated noise is injected into the original data before transmission, any subsequent operations on the privatized data cannot weaken the privacy guarantee. Consequently, the proposed framework enables the LLM server to compensate for privacy-induced utility loss while maintaining rigorous DP guarantees of user queries.

\noindent \textbf{Contributions.} In summary, this paper makes the following contributions.
\begin{itemize}
    \item We introduce a novel DP mechanism for LLM split inference, where user-side token embeddings are projected into a low-dimensional latent space and further privatized through a differentially private stochastic $n$-bit quantization mechanism, thereby providing differential privacy guarantees while significantly reducing communication overhead.
    \item We propose a lightweight privacy-preserving LLM inference framework that leverages server-side soft prompt to compensate for the utility loss caused by privacy, eliminating the need to deploy computationally intensive local denoising models on the user side.
    % fully exploiting the LLM server’s capabilities instead of depending on local post-processing models.
    % \item We introduce a projector--deprojector architecture that not only reduces communication costs via embedding compression but also provides an inherent denoising effect, leading to superior communication--privacy--utility trade-offs.
    \item Extensive experiments on both text generation and natural language understanding tasks with various state-of-the-art LLMs demonstrate that the proposed framework exhibits a better trade-off between privacy and utility than the state-of-the-art approaches in differentially private LLM inference.
\end{itemize}

\section{Related Work}
\subsection{Privacy Protection for LLMs}
With the rapid development of LLMs, concerns about privacy leakage have become increasingly prominent. Prior studies on privacy preservation mainly target the training stage, including pre-training \cite{Hoory2021LearningAE,yin-habernal-2022-privacy,Ganesh2023WhyIP}, fine-tuning \cite{Li2021LargeLM,Bu2022DifferentiallyPO,kurakin2023harnessing}, and parameter-efficient adaptation \cite{Yu2021DifferentiallyPF,Li2023PrivacyPreservingPT,tang2024privacy_preserving_icl,duan2023privacy}. In contrast, our work focuses on the inference phase, where the LLM is already pretrained and kept frozen.  

CipherGPT \cite{hou2023ciphergpt} and Ditto \cite{wu24ditto} enable inference on encrypted data via homomorphic encryption. However, their computational and communication overheads hinder deployment for large-scale LLMs. Alternatively, TextFusion \cite{zhou2022textfusion} employs dynamic token fusion, while Hide-and-Seek  \cite{Chen2023HideAS} and PrivacyRestore \cite{zeng2025privacyrestore} remove sensitive entities in the input before sending it to the server and restore the utility afterward. However, these approaches are concerned with predefined entities, leaving other elements (e.g., verbs or descriptive attributes) unprotected. 

%While effective for predefined entities, these approaches fail to protect arbitrary linguistic elements (e.g., verbs or descriptive attributes), leaving significant privacy gaps.}
% PrivacyRestore \cite{zeng2025privacyrestore} removes privacy spans in user queries and pretrains a steering vector for each span type to restore utility during inference. While effective for protecting predetermined entities, these mechanisms cannot guarantee privacy for every token in a user query, leaving some types of sensitive linguistic elements (e.g., verbs or descriptive attributes) unprotected.}

Another line of research focuses on providing token-level DP guarantees. For instance, SANTEXT \cite{yue2021differential}, CUSTEXT \cite{chen2023customized}, RANTEXT \cite{tong2025inferdpt} and CAPE \cite{wu2025cape} achieve DP by replacing words with semantically similar counterparts. Conversely, DP-Forward \cite{du2023dpforward} employs a split inference paradigm, injecting random perturbations into token embeddings prior to transmission. To further improve the privacy-utility trade-off, SnD \cite{mai2024split} introduces a local denoising module to recover perturbed outputs from remote LLMs, while similarly applying Laplacian noise to token embeddings. Although these methods achieve promising privacy–utility trade-offs, they are designed for natural language understanding tasks and do not extend well to open-ended text generation. Recently, InferDPT \cite{tong2025inferdpt} extends privacy-preserving inference to text generation tasks by perturbing the input text locally and introducing a quantized local LLM as an extraction module to refine the perturbed outputs into coherent generations aligned with the original prompts. However, deploying  LLM locally incurs substantial computation and memory costs, and the generation performance is largely constrained by the local model's capacity rather than the remote LLM service. 

\subsection{Soft Prompt}
Prompt engineering \cite{liu2023csur} has emerged as an effective paradigm to steer LLMs without modifying their parameters. By providing instructions or examples in natural language, LLMs can be guided to perform target tasks in a zero-shot or few-shot manner. However, manually designing prompts is both labor-intensive and error-prone. Consequently, a series of works aim to automate prompt optimization \cite{li2021prefix,shin2020autoprompt}.
A particularly influential approach is soft prompt tuning \cite{lester2021power}, which introduces trainable continuous embeddings (i.e., soft prompt) prepended to the input, with only these embeddings updated during training while keeping the LLM parameters frozen. During inference, the learned soft prompt is prepended to the input embeddings to adapt the model to downstream tasks. Subsequent studies have demonstrated that soft prompt achieves strong performance across diverse tasks, including natural language understanding \cite{wu2023infoprompt}, graph learning \cite{liu2024can}, and recovering the accuracy of compressed LLMs \cite{xu2024soft}. Beyond task-specific adaptation, soft prompt has also shown notable transferability across different models and task domains \cite{su2021transferability,xu2024soft}.
Nevertheless, to the best of our knowledge, no prior work has explored leveraging soft prompt to maintain the inference performance of LLMs under privacy-preserving settings.

% 介绍DP，方法中我们工作几部分：1.新的隐私衡量机制，因此需要在introduction说明chidp缺陷，以及两者trade-off对比；2.减少通信开销手段；3.提优策略，放服务器端，以及服务器实际实现可以先映射再训soft

\section{Preliminaries and Problem Setup}\label{pre}
% \subsection{Differential Privacy}
% \noindent Formally, differential privacy is defined as follows.
% \begin{definition}[$(\epsilon,\delta)$-DP \cite{dwork2006our}]\label{preliedp}
% A randomized mechanism $\mathcal{M}$ satisfies $(\epsilon,\delta)$-differentially private if for all neighboring datasets $X$, $X'$ that differ in the record of a single individual, and any event $E$,
% \begin{equation}
% P(\mathcal{M}(X)\in E) \leq e^{\epsilon} P(\mathcal{M}(X')\in E) + \delta,
% \end{equation}
% in which $\epsilon,\delta \geq 0$ are the parameters that characterize the level of differential privacy.  
% \end{definition}

\subsection{$f$-Differential Privacy}\label{prelifdp}
% \noindent We adopt the $f$-differential privacy ($f$-DP) framework, which offers a hypothesis-testing view of privacy and can be losslessly converted to $(\epsilon,\delta)$-DP. 
Given a randomized mechanism $\mathcal{M}$ and two neighboring datasets $X$ and $X'$, we consider the hypotheses
\begin{equation}\label{hypothesistestingproblem}
\begin{split}
H_0&:\text{the dataset is } X,\\
H_1&:\text{the dataset is } X'.
\end{split}
\end{equation}
Let $P$ and $P'$ denote the distributions of $\mathcal{M}(X)$ and $\mathcal{M}(X')$. 
For any rejection rule $0\le\phi\le1$, the type I and type II error rates are 
\begin{equation}
\alpha_{\phi}=\mathbb{E}_{P}[\phi],\quad \beta_{\phi}=1-\mathbb{E}_{P'}[\phi].
\end{equation}

\begin{definition}[Trade-off function~\cite{dong2022gaussian}]
For distributions $P$ and $P'$, the trade-off function $T(P,P'):[0,1]\to[0,1]$ is
\begin{equation}
T(P,P')(\alpha)=\inf\{\beta_{\phi}:\alpha_{\phi}\le\alpha\},
\end{equation}
where the infimum is over all measurable $\phi$. A larger $T(P,P')$ implies that the two distributions are harder to distinguish.
\end{definition}

\begin{definition}[$f$-DP~\cite{dong2022gaussian}]
A mechanism $\mathcal{M}$ is $f$-differentially private if 
\begin{equation}
T(\mathcal{M}(X),\mathcal{M}(X')) \ge f
\end{equation}
for all neighboring $X,X'$, meaning that an attacker cannot achieve a type II error rate smaller than $f(\alpha)$.
\end{definition}

When $T(P,P')$ is defined between two normal distributions $\mathcal{N}(0,1)$ and $\mathcal{N}(\mu,1)$, we obtain a subfamily of $f$-DP guarantees termed $\mu$-Gaussian differential privacy ($\mu$-GDP), denoted by $G_\mu(\alpha)$.

\subsection{Threat Model and Problem Formulation}
We consider the scenario where
% a split inference setting \cite{mai2024split} in which 
the user is required to submit personal queries to an LLM server maintained by a third-party provider for inference. The server is assumed to be \emph{honest-but-curious}: it honestly executes inference but may attempt to infer sensitive information from the received queries. Additionally, we assume that the server is aware of the DP mechanism, but not of the underlying private data or the realized noise.
% In addition, we account for potential eavesdroppers who may intercept transmitted data over communication channels in order to steal private information.  

\begin{figure*}[!htbp]
    \centering
    \includegraphics[width=\linewidth]{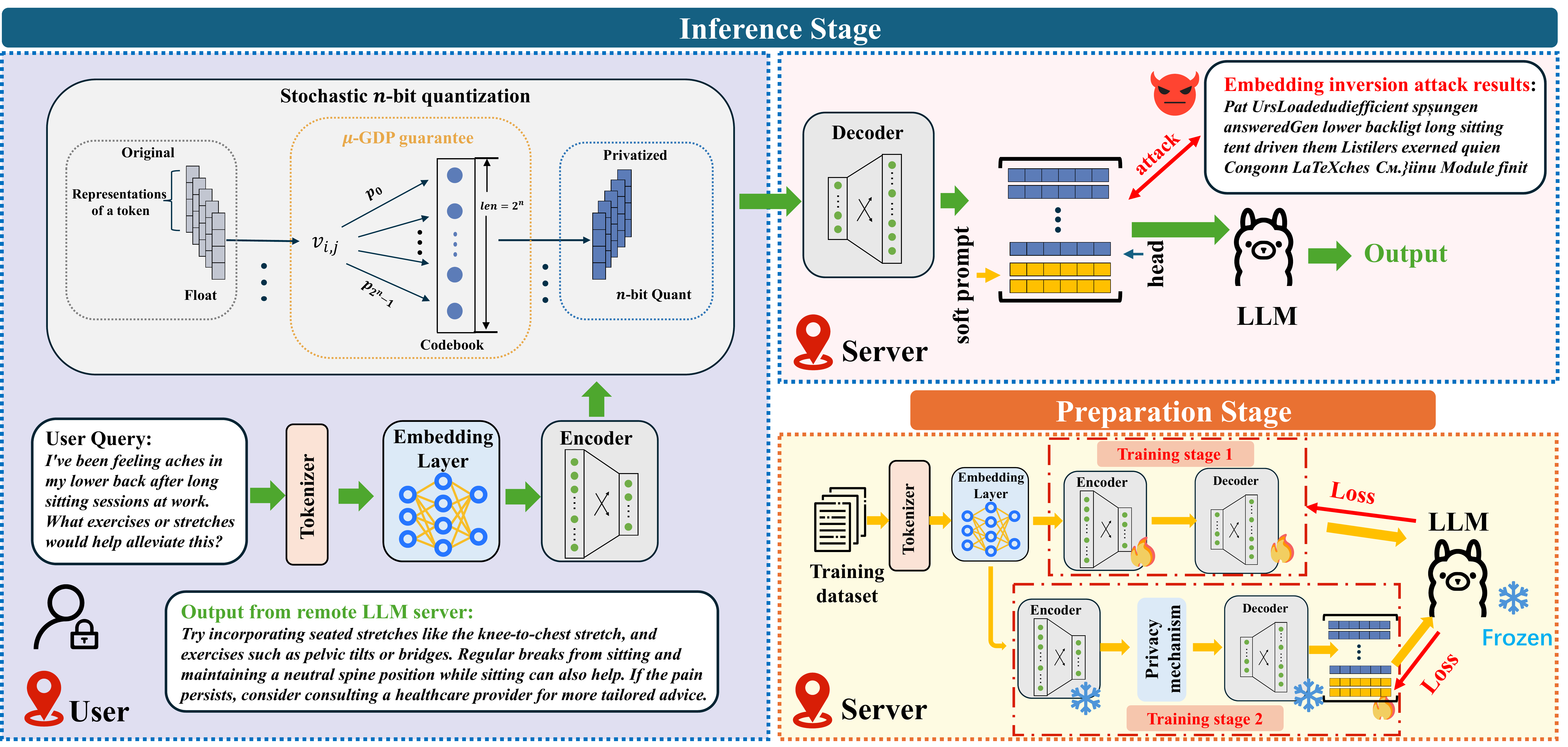}
    \vspace{-0.2in}
    \caption{An overview of the proposed differentially private and utility-preserving LLM inference framework.}
    
    \label{fig:framework}
    \vspace{-0.1in}
\end{figure*}

In this work, we partition the LLM into two components: a local encoder $G_l$ deployed on the user and a cloud encoder $G_c$ residing on the server. 
% Specifically, only the token embedding layer is retained at the user, while all subsequent layers of the LLM are executed on the server.
During inference, the user first converts the input text into latent representations through $G_l$, which are then transmitted to the server. The server continues the forward propagation through $G_c$ to generate logits or textual responses.

Formally, let $\bm{s} = [s_1, \ldots, s_T]$ denote the token sequence of length $T$ obtained from the original text query after tokenization. The local encoder $G_l$ maps $\bm{s}$ into representations $\bm{v}=[\bm{v}_1,\dots,\bm{v}_T]$, where each $\bm{v}_i=[v_{i,1},\dots,v_{i,d}]$ corresponds to a $d$-dimensional latent representation of token $s_i$. To protect the privacy in the latent space, a differential privacy mechanism $\mathcal{M}$ is applied to perturb the representations before transmission, and the privatized representations are denoted as $\hat{\bm{v}}=\mathcal{M}(\bm{v})$, which are then transmitted to the server for further processing. Given this setup, the inference task can be formulated as
\begin{equation}
    \max_\mathcal{M} \;
     p(\bm{y}\mid G_c(\mathcal{M}(G_l(\bm{s})))), \quad
    \text{s.t. } 
     \mathcal{M}\text{ satisfies } f\text{-DP},
\end{equation}
where $\bm{y}$ denotes the target output of the LLM and $\mathcal{M}$ enforces differential privacy guarantees, thereby ensuring that the released embeddings do not compromise the confidentiality of the original query.

\section{Methodology}
% \subsection{Overview}
In the following, we formally introduce the proposed \textbf{D}ifferentially private and communication \textbf{E}fficient \textbf{L}LM split inference (\textbf{DEL}) framework. The overall pipeline of the proposed framework is illustrated in Figure~\ref{fig:framework}.

% \textbf{Privacy module:} Ensures privacy protection. Given the embedding vector of the user’s original text, this module perturbs it locally on the user side before transmission, providing a formal Local Differential Privacy (LDP) guarantee.

% \textbf{Utility module:} Preserves task utility. Since privatized inputs to LLMs inevitably lead to degraded outputs, this module leverages server-side soft prompt to enhance the performance of the LLMs under perturbed inputs.

% \textbf{Compression module:} Reduces communication cost. To relieve the high dimensionality and consequent extensive communication cost of embeddings, this module reduces the dimension of embedding vectors, thereby lowering bandwidth requirements.

\subsection{Privacy Protection mechanism}\label{privacy_section}
Given a user query, DEL perturbs it locally in the latent space on the user side, providing the local DP guarantee. Specifically, the user's token sequence $\bm{s}$ is first mapped to token embeddings $\bm{x}=[\bm{x}_1,\dots,\bm{x}_T]$ through the embedding layer of the LLM, where each $\bm{x}_i \in \mathbb{R}^b$ denotes the $b$-dimensional embedding of token $s_i$. 

The Gaussian mechanism \cite{abadi2016deep} is the most widely adopted DP mechanism owing to its tight privacy guarantees and favorable composability properties \cite{abadi2016deep}. It satisfies $\mu$-GDP and is near order-optimal across most parameter regimes \cite{chen2023privacy}. Formally, given vector $\bm{x}_i$, the Gaussian mechanism outputs
\begin{equation}
    \mathcal{M}^{\text{Gauss}}(\bm{x}_{i}) \;=\; \text{clip}(\bm{x}_{i}, C) + \bm{n}, 
    \quad \bm{n} \sim \mathcal{N}\!\left(0, \sigma^2\right),
\end{equation}
which satisfies $\frac{2C}{\sigma}$-GDP and $C$ denotes the $\ell_2$-norm clipping bound.

Nevertheless, the high dimensionality and continuous nature of the Gaussian mechanism lead to poor communication efficiency. Moreover, as discussed in \cite{liu2021highdimdp}, differential privacy suffers from the curse of dimensionality. Specifically, the amount of noise required to achieve a fixed privacy level grows rapidly with the data dimension, leading to severe utility degradation in high-dimensional settings. To address these issues, we first introduce a dimensionality-reduction module consisting of an encoder–decoder pair $(g_e, g_d)$. The encoder $g_e$ projects $\bm{x}$ to $\bm{v}$ with each $\bm{v}_i\in\mathbb{R}^d$ ($d<b$), yielding the latent representations $\bm{v}=g_e(\bm{x})$. The decoder $g_d$ reconstructs the embeddings from the latent representations as $\hat{\bm{x}}=g_d(\bm{v})$ on the server for LLM generation, where each $\hat{\bm{x}}_i\in\mathbb{R}^b$. 
To derive both components, we pretrain them on the server side 
% as illustrated in the Training stage 1 in Figure~\ref{fig:framework}, 
and after which the encoder $g_e$ is distributed to users. 
The learning objective is defined as
\begin{align}
    \min_{g_e,g_d} \sum_{i=1}^T 
    -\log \Pr\!\Big[ \bm{x}_{i+1} \,\Big|\,
    & G_c(g_d(g_e(\bm{x}_1))), \dots, \notag \\
    & G_c(g_d(g_e(\bm{x}_{i}))) \Big].
\end{align}

Furthermore, we propose a stochastic $n$-bit quantization mechanism, which explicitly incorporates the privacy contribution of the stochastic quantization process and ensures that the overall procedure remains $\mu$-GDP while achieving $n$-bit quantization.
\begin{definition}[Stochastic $n$-bit Quantization]\label{define_sto}
Let $\bm{v}_i = [v_{i,1}, \dots, v_{i,d}] \in [-c, c]^d$ be a $d$-dimensional vector,
% , where $c > 0$ is the clipping bound, 
and let $A \ge c$ be a scaling parameter that controls the privacy budget.  
The stochastic $n$-bit quantization mechanism $\mathcal{M}^{\mathrm{sto}}(\cdot; A, n)$ maps each coordinate $v_{i,j}$ to one of $2^n$ uniformly spaced levels in $[-A, A]$.
% :
% \[
% \mathcal{Q}_n = \left\{q_k = \tfrac{2k - (2^n - 1)}{2^n - 1}A \,\middle|\, k = 0, \dots, 2^n - 1 \right\}.
% \]
For each coordinate, define $u = 2^n - 1$ and $p(v_{i,j}) = (A + v_{i,j}) / (2A)$.  
Then an integer $K \in \{0, \dots, u\}$ is randomly drawn from the binomial distribution
\begin{equation}
   K \sim \mathrm{Binomial}\big(u,\, p(v_{i,j})\big), 
\end{equation}
and the quantized output is
\begin{equation}
\mathcal{M}^{\mathrm{sto}}(v_{i,j}; A, n) = \tfrac{2K - (2^n - 1)}{2^n - 1}A.
\end{equation}
Applying this mapping independently to each coordinate yields the quantized vector
$\mathcal{M}^{\mathrm{sto}}(\bm{v}_i; A, n)$.
\end{definition}

Intuitively, for any two distinct coordinates in $[-c,c]$, the stochastic $n$-bit quantization can map them, each with nonzero probability, to the same level in $[-A,A]$, which makes them indistinguishable from the output alone. Moreover, when $A\!\gg\!c$, the binomial parameter satisfies $p(v_{i,j})\!\approx\!\tfrac{1}{2}$ uniformly over $v_{i,j}\in[-c,c]$, so the output distribution becomes nearly input-agnostic, thereby corresponding to stronger privacy.

% Theorem \label{theorem_sto} below gives the variance and privacy guarantee of the proposed stochastic $n$-bit quantization, and the proofs can be found in Appendix \ref{theorem_sto}.

\begin{theorem}\label{theorem_sto}
For a given vector $\bm{v}_i = [v_{i,1}, \dots, v_{i,d}] \in [-c, c]^d$, 
$\hat{\bm{v}}_i=\mathcal{M}^{\text{sto}}(\bm{v}_i; A, n)$ is an unbiased estimator of $\bm{v}_i$ with the variance $\operatorname{Var}(\mathcal{M}^{\text{sto}}(\bm{v}_{i}; A, n))
=\frac{dA^2-\|\bm{v}_i\|_2^2}{2^n-1}$. 
Moreover, the mechanism is $f^\text{sto}(\alpha)$-DP with
\begin{equation}
        G_{\mu}(\alpha+\gamma)-\gamma \leq f^\text{sto}(\alpha) \leq G_{\mu}(\alpha-\gamma)+\gamma,
\end{equation}
in which
\begin{equation}
\mu = \frac{2\sqrt{(2^n-1)d}c}{\sqrt{A^2-c^2}},
\end{equation}
\begin{equation}
\gamma = \frac{0.56\left[\frac{A-c}{2A}\left|1+\frac{c}{A}\right|^3+\frac{A+c}{2A}\left|1-\frac{c}{A}\right|^3\right]}{(1-\frac{c^2}{A^2})^{3/2}\sqrt{(2^n-1)d}}.
\end{equation}

% Moreover, the mechanism's privacy guarantee asymptotically approaches $\mu$-GDP as $d$ increases, where $\mu = \frac{2\sqrt{(2^n - 1)d}\,c}{\sqrt{A^2 - c^2}}$.
\end{theorem}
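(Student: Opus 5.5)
The proof has two parts: the first and second moments, and the privacy guarantee.

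\textbf{Unbiasedness and variance.} Fix a coordinate and write $u=2^n-1$ and $p=p(v_{i,j})=(A+v_{i,j})/(2A)$. The output is $\frac{2K-u}{u}A$ with $K\sim\mathrm{Binomial}(u,p)$. Then $\mathbb{E}[K]=up$, so the mean is $(2p-1)A=v_{i,j}$. Also $\operatorname{Var}(K)=up(1-p)$, so the coordinate variance is
\[
\frac{4A^2}{u^2}\,u\,p(1-p)=\frac{4A^2}{u}\cdot\frac{(A+v_{i,j})(A-v_{i,j})}{4A^2}=\frac{A^2-v_{i,j}^2}{u}.
\]
Coordinates are quantized independently, so summing over $j$ gives $(dA^2-\|\bm v_i\|_2^2)/(2^n-1)$.

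\textbf{Privacy: reduction to Bernoulli trials.} The output is a bijective function of $(K_1,\dots,K_d)$. Each $K_j$ is a sum of $u$ i.i.d.\ Bernoulli$(p_j)$ variables, and releasing the individual bits is only more informative. By post-processing, it therefore suffices to lower-bound the trade-off function for $ud$ independent Bernoulli trials, with success probabilities $p_j$ under $\bm v_i$ versus $p_j'$ under a neighbour $\bm v_i'$ in $[-c,c]^d$. Each $p_j$ lies in $[\frac{A-c}{2A},\frac{A+c}{2A}]$.

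The worst case should be the extreme pair: $p=\frac{A-c}{2A}$ for every coordinate versus $p'=\frac{A+c}{2A}$ for every coordinate. I would justify this by a monotonicity argument: the likelihood ratio of a Bernoulli family is monotone in $p$, so increasing $|p_j-p_j'|$ can only make the hypotheses easier to distinguish. Alternatively, the argument can simply be stated for this extremal pair, which is how such theorems are usually treated.

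\textbf{Privacy: CLT for the log-likelihood ratio.} Now apply the central limit theorem for $f$-DP via a Berry--Esseen bound, following Dong et al.'s treatment of composition of Bernoulli mechanisms. For the extremal pair the log-likelihood ratio of a single bit is $\pm\log\frac{A+c}{A-c}$. Its mean shift between the hypotheses is proportional to $p'-p=c/A$, and its per-bit variance is proportional to $p(1-p)=\frac{A^2-c^2}{4A^2}$. Standardising the sum of $N=ud$ bits gives a Gaussian shift
\[
\mu=\frac{\sqrt N\,(p'-p)}{\sqrt{p(1-p)}}=\frac{\sqrt{N}\,(c/A)}{\sqrt{(A^2-c^2)}/(2A)}=\frac{2\sqrt{(2^n-1)d}\,c}{\sqrt{A^2-c^2}},
\]
which matches the statement.

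The Berry--Esseen theorem, with constant $0.56$, bounds the Kolmogorov distance between the standardised sum and a normal by $0.56\,\rho/(\sigma^3\sqrt N)$. The centred Bernoulli has third absolute moment
\[
\rho=p(1-p)^3+(1-p)p^3.
\]
After rescaling $\pm$ values by $1/A$-type factors, this gives the bracket $\frac{A-c}{2A}|1+\frac cA|^3+\frac{A+c}{2A}|1-\frac cA|^3$, and $\sigma^3$ gives the $(1-c^2/A^2)^{3/2}$ denominator. Call the resulting bound $\gamma$.

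The sum of the bits is sufficient, so by Neyman--Pearson the optimal tests are thresholds on this statistic. Under both hypotheses its CDF is within $\gamma$ of the corresponding normal CDF. Translating these Kolmogorov bounds into trade-off functions gives $G_\mu(\alpha+\gamma)-\gamma\le f^{\text{sto}}(\alpha)\le G_\mu(\alpha-\gamma)+\gamma$.

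\textbf{Main obstacle.} The delicate step is making this sandwich rigorous. The statistic is discrete, so threshold tests must be randomised. Also, the CLT standardisation is under $H_0$ while the variance differs slightly under $H_1$, since $p(1-p)=p'(1-p')$ only by symmetry here. I would exploit that symmetry, $p'=1-p$, to use a common variance under both hypotheses, and then carefully bound the Kolmogorov errors separately under each hypothesis.
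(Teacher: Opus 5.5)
Your proof is correct and arrives at exactly the paper's $\mu$ and $\gamma$. It shares the paper's decomposition into $(2^n-1)d$ independent binary randomized responses with an extremal worst case, but it replaces the key lemma.

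\textbf{Comparison with the paper.} The paper first computes the per-bit trade-off function $f$ explicitly via Neyman--Pearson. It is piecewise linear with slopes $-\frac{A+c}{A-c}$ and $-\frac{A-c}{A+c}$ and a kink at $\alpha=\frac{A-c}{2A}$. With $L=\log\frac{A+c}{A-c}$, the paper plugs $\mathrm{kl}(f)=\frac{c}{A}L$, $\kappa_2(f)=L^2$ and $\bar\kappa_3(f)=[\cdots]L^3$ into the Dong--Roth--Su CLT for $f^{\otimes(2^n-1)d}$, and the factors of $L$ cancel.

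You instead unpack that theorem. For the extremal pair the log-likelihood ratio is $(2S-N)L$, which is affine in the binomial count $S$. So Berry--Esseen applied to $S$ is Berry--Esseen applied to the LLR. Your ratio $\rho/\sigma^3=\frac{1+c^2/A^2}{\sqrt{1-c^2/A^2}}$ equals the paper's bracket divided by $(1-c^2/A^2)^{3/2}$, since that bracket simplifies to $(1-c^2/A^2)(1+c^2/A^2)$.

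What your route buys is self-containment and transparency: $\mu$ is visibly the mean shift over the standard deviation of a binomial. What it costs is that you must convert Kolmogorov bounds to trade-off bounds yourself. That step is routine:
\begin{itemize}
\item a randomized threshold test has type I error between $\Pr(Z>t)$ and $\Pr(Z\ge t)$, and type II error between $\Pr(Z<t)$ and $\Pr(Z\le t)$;
\item the Kolmogorov bound controls both ends because $\Phi$ is continuous;
\item by the symmetry $p'=1-p$, the same $\gamma$ applies under $H_1$.
\end{itemize}
The paper's black-box route handles this bookkeeping automatically and would extend unchanged to non-identical per-bit trade-off functions.

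\textbf{Step to tighten: reduction to the extremal pair.} Ordering pairs by $|p_j-p_j'|$ is not the right criterion, since pairs with the same gap but closer to $0$ or $1$ are easier to distinguish. The monotone likelihood ratio only tells you that optimal tests are thresholds.

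What you need is the following. For $p_{\min}\le a<b\le p_{\max}$,
\[
T(\mathrm{Bern}(a),\mathrm{Bern}(b))\ge T(\mathrm{Bern}(p_{\min}),\mathrm{Bern}(p_{\max})).
\]
This holds because the single kink $(a,1-b)$ of the left-hand curve satisfies $1-b\ge p_{\min}\ge \frac{p_{\min}}{p_{\max}}(1-a)$, so it lies on or above the convex right-hand curve, and chords then stay above it. You then need monotonicity of $\otimes$ to pass to the product. Coordinates with $v_{i,j}=v'_{i,j}$ contribute the identity, and the direction of each inequality is immaterial by the $p\leftrightarrow 1-p$ symmetry.

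Merely stating the result for the extremal pair, as your fallback suggests, would not establish $f$-DP over all neighbours. The paper is equally terse at this point: it simply asserts that $\beta_\phi$ is minimized at the endpoints.
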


\begin{remark}\label{remark_ana}
    Let $\bm{v}_i \in \mathbb{R}^d$ be processed by both the Gaussian mechanism and the stochastic $n$-bit quantization mechanism, where $\left|v_{i,j}\right|\leq c=\frac{C}{\sqrt{d}}$ for all $j$. Under the conditions $A^2 - c^2 = (2^n - 1)\sigma^2$ and $(1 - \frac{c^2}{A^2})(2^n - 1)d \gg 1.12$, the $\mu$ in Theorem~\ref{theorem_sto} coincides with that of the Gaussian mechanism, and $\gamma$ in Theorem~\ref{theorem_sto} vanishes asymptotically. In this case, we have $\operatorname{Var}(\mathcal{M}^{\text{Gauss}}(\bm{v}_{i}))=\sigma^2d$ and $\operatorname{Var}(\mathcal{M}^{\text{sto}}(\bm{v}_{i}; A, n))-\operatorname{Var}(\mathcal{M}^{\text{Gauss}}(\bm{v}_{i}))=\frac{c^2d-\|\bm{v_i}\|_2^2}{2^n-1}$. Specifically, when $v_{i,j}\in \{-c, c\}$, the stochastic $n$-bit quantization mechanism demonstrates the same $f$-DP privacy guarantee and variance as that of Gaussian mechanism, i.e., the improvement in communication efficiency is obtained for free.
\end{remark}
% The detailed proofs are in Appendix \ref{theorem_sto}.

\subsection{Utility Preserving Mechanism}
Since perturbing the input embeddings inevitably degrades the inference performance of LLM, we alleviate this issue by adapting soft prompt. Specifically, given a privatized latent representation $\hat{\bm{v}}$, the corresponding privatized embedding reconstructed by the server-side decoder is denoted as $\hat{\bm{x}}$. Together with the soft prompt $\bm{E} = [\bm{e}_1, \dots, \bm{e}_r]$, where each $\bm{e}_j \in \mathbb{R}^b$, the final input to the server-side model is constructed as the concatenation $[\bm{E}, \hat{\bm{x}}]$. The output of the cloud encoder at the server is then represented as $G_c([\bm{E},\hat{\bm{x}}])$. 
% By the post-processing property of differential privacy \cite{dong2022gaussian}, the insertion of the soft prompt does not compromise the DP guarantee provided for the user embedding.

% For tuning soft prompt, 
% we tuning $\bm{E}$ by
% incorporating the corresponding privacy mechanism with the frozen LLM and frozen pretrained dimentionality reduction module on the server side.
For soft prompt tuning, we optimize only the soft prompt $\bm{E}$ while keeping the LLM and the pretrained encoder–decoder pair $(g_e, g_d)$ frozen, and perform the optimization under the corresponding privacy mechanism on the server side.
The training objective is to maximize the conditional likelihood of the ground-truth sequence given the privatized embeddings $\hat{\bm{x}}$ and the prepended soft prompt. 
Formally, the optimization problem can be written as
\begin{equation}
    \min_{\bm{E}} \sum_{i=1}^T 
    -\log \Pr\!\left[\bm{x}_{i+1} \,\middle|\,
    \bm{e}_1,\dots,\bm{e}_r,\,\hat{\bm{x}}_1,\dots,\hat{\bm{x}}_i\right].
\end{equation}
Through this optimization, the learned prompt becomes privacy-aware, adapting the model’s behavior to the distributional shift introduced by the privatized embeddings. 
This enables the server-side LLM to effectively compensate for the degradation caused by privacy during inference. 
As detailed in Section~\ref{exp}, we consider employing either the training data of the target dataset or publicly available data as the source for soft prompt tuning at the server, and the results, as presented in Table~\ref{tab:soft_mu_transfer}, demonstrate that incorporating the soft prompt can substantially enhance the model’s performance across various privacy levels.

\begin{table*}[!htbp]
\centering
\caption{Comparison of open-ended text generation performance across different ASRs. All results are reported using the COH score.}
\label{tab:results_transposed}
\scriptsize
\setlength{\tabcolsep}{3.5pt}
\renewcommand{\arraystretch}{1.15}
\begin{tabular}{l|c|ccc|c|ccc}
\hline
\multirow{2}{*}[-0.5ex]{\makecell[c]{Method}} & 
% \multicolumn{4}{c|}{\textbf{InferDPT}} & 
\multirow{2}{*}[-0.5ex]{\makecell[c]{\textbf{RANTEXT}}} & \multicolumn{3}{c|}{\textbf{InferDPT}} &
\multirow{2}{*}[-0.5ex]{\makecell[c]{\textbf{DEL}}} &
\multicolumn{3}{c}{\textbf{DEL + Local Model}} \\
\cline{3-5} \cline{7-9}
 &
 % & \makecell{w/o local \\ model} 
 & \makecell{with local \\ DialoGPT} & \makecell{with local \\ OPT-1.3B} & \makecell{with local \\ Llama2-4bit} &  & DialoGPT & OPT-1.3B & Llama2-4bit \\
\hline
\multicolumn{9}{c}{\textbf{Llama3-8B + PTB}} \\
\hline
ASR = 0.02 & 0.517 & 0.581 & 0.565 & 0.624 & 0.570 & 0.587 & 0.558 & \textbf{0.639} \\
ASR = 0.10 & 0.520 & 0.585 & 0.567 & 0.615 & \textbf{0.636} & 0.592 & 0.620 & 0.636 \\
ASR = 0.15 & 0.522 & 0.584 & 0.566 & 0.617 & \textbf{0.650} & 0.592 & 0.618 & 0.636 \\
ASR = 0.20 & 0.526 & 0.583 & 0.561 & 0.628 & \textbf{0.661} & 0.591 & 0.633 & 0.642 \\
\hline
\multicolumn{9}{c}{\textbf{Llama3-8B + Wikitext-2}} \\
\hline
ASR = 0.02 & 0.477 & 0.515 & 0.665 & \textbf{0.742} & 0.562 & 0.527 & 0.602 & 0.736 \\
ASR = 0.10 & 0.492 & 0.517 & 0.639 & 0.741 & 0.720 & 0.522 & 0.653 & \textbf{0.747} \\
ASR = 0.15 & 0.493 & 0.518 & 0.648 & 0.748 & \textbf{0.770} & 0.522 & 0.683 & 0.765 \\
ASR = 0.20 & 0.492 & 0.519 & 0.658 & 0.735 & \textbf{0.783} & 0.523 & 0.692 & 0.766 \\
\hline
\multicolumn{9}{c}{\textbf{Qwen2.5-7B + PTB}} \\
\hline
ASR = 0.02 & 0.517 & 0.570 & 0.558 & 0.623 & 0.590 & 0.589 & 0.598 & \textbf{0.640} \\
ASR = 0.10 & 0.521 & 0.575 & 0.568 & 0.629 & \textbf{0.655} & 0.592 & 0.628 & 0.645 \\
ASR = 0.15 & 0.526 & 0.574 & 0.561 & 0.617 & \textbf{0.661} & 0.592 & 0.625 & 0.639 \\
ASR = 0.20 & 0.525 & 0.571 & 0.565 & 0.629 & \textbf{0.677} & 0.593 & 0.630 & 0.637 \\
\hline
\multicolumn{9}{c}{\textbf{Qwen2.5-7B + Wikitext-2}} \\
\hline
ASR = 0.02 & 0.486 & 0.515 & 0.649 & \textbf{0.752} & 0.690 & 0.527 & 0.642 & 0.732 \\
ASR = 0.10 & 0.489 & 0.514 & 0.638 & 0.751 & \textbf{0.781} & 0.521 & 0.697 & 0.764 \\
ASR = 0.15 & 0.493 & 0.516 & 0.640 & 0.756 & \textbf{0.790} & 0.525 & 0.698 & 0.770 \\
ASR = 0.20 & 0.493 & 0.516 & 0.642 & 0.739 & \textbf{0.800} & 0.523 & 0.700 & 0.778 \\
\hline
\end{tabular}
% \vspace{-0.1in}
\end{table*}

\section{Experiments}\label{exp}
\subsection{Experiments Setup}
\textbf{Datasets and Models:} We evaluate the proposed method on four widely used language generation benchmarks: the C4 web corpus from Common Crawl \cite{Raffel2020T5}, WikiText-2 \cite{Merity2017PointerSentinel}, Penn Treebank (PTB) \cite{marcus-etal-1994-penn}, CNN/Daily Mail \cite{hermann2015teaching}, and two natural language understanding (NLU) downstream tasks: Quora Question Pairs (QQP) \cite{quora2017qqp} and MSR Paraphrase Corpus (MRPC) \cite{dolan-brockett-2005-automatically}. 
For LLMs, we adopt Llama3 \cite{touvron2023}, Qwen2.5 \cite{qwen2025qwen25}, DeepSeek-MoE \cite{Dai2024DeepSeekMoE}, Pangu-Embedded \cite{Chen2025PanguEmbedded} for open-ended text generation tasks and BERT \cite{devlin2019bert} for NLU tasks.

% Tweet Eval \cite{barbieri2020tweeteval}, 

\textbf{Metrics:} For performance evaluation, we report Perplexity (PPL), which quantifies how well the model predicts the next token, and Coherence (COH) \cite{tong2025inferdpt}, which measures the semantic consistency between the generated and target texts based on sentence embeddings for the open-ended text generation task. More details are given in Appendix \ref{metric_app}.
% Perplexity is a standard intrinsic metric for language modeling, which measures how well the model predicts a given sequence. 
% Formally, for a sequence $\bm{s}=(s_1,\dots,s_T)$, the perplexity is defined as
% \begin{equation}
%     \mathrm{PPL}(\bm{s})=\exp\left(-\frac{1}{T}\sum_{t=1}^{T}\log p(s_t|s_{<t})\right),
% \end{equation}
% where $p(s_t|s_{<t})$ denotes the probability assigned by the model to the next token $s_t$ given the prefix $s_{<t}$. 
% Lower perplexity indicates that the model assigns a higher likelihood to the ground-truth sequence, thus reflecting stronger language modeling ability.  
% Coherence evaluates the semantic consistency between the user prompt text $Doc$ and the generated output text of LLM $Gen$ by computing the cosine similarity of their embeddings:
% \begin{equation}
%     COH(Doc,Gen)=\frac{\text{SimCSE}(Doc)\cdot \text{SimCSE}(Gen)}{\|\text{SimCSE}(Doc)\|\cdot \|\text{SimCSE}(Gen)\|},
% \end{equation}
% where $\text{SimCSE}(\cdot)$ denotes the pretrained sentence embedding model \cite{gao-etal-2021-simcse}.
For NLU tasks, we follow \cite{mai2024split} and report accuracy (Acc) and area under the ROC curve (AUC) as evaluation metrics.

% To evaluate the privacy mechanism, we consider both the widely adopted white-box embedding inversion attack and input inference attack \cite{yue2021differential,mai2024split,tong2025inferdpt}. The embedding inversion attack computes the cosine similarity between the embedding of each token in the perturbed input and all token embeddings in the vocabulary, and predicts the token with the highest similarity as the recovered one. The input inference attack sequentially replacing the perturbed token with the special token [MASK] and employ BERT \cite{devlin-etal-2019-bert} to predict it.  The attack success rate (ASR) is reported as the metric for privacy evaluation. We find that embedding inversion attack always exhibits higher ASR than input inference attack and defaultly report the ASR of embedding inversion attack in the following. More details can be found in the Appendix.

To evaluate privacy, we consider both embedding inversion and input inference attacks \cite{mai2024split,tong2025inferdpt}. The embedding inversion attack recovers tokens by identifying the vocabulary entry with the highest similarity to the perturbed embedding. The input inference attack works by sequentially replacing perturbed tokens with [MASK] and employing BERT to predict them. We report the ASR of the embedding inversion attack by default, as it consistently outperforms the input inference attack (see Appendix \ref{metric_app} for detailed comparisons).

% and train the soft prompt on the training splits of the downstream evaluation datasets under specified privacy budgets.

\textbf{Baselines:} To demonstrate the effectiveness of the proposed method, we compare it with representative open-source baselines. We primarily focus on the open-ended text generation task, benchmarking against the state-of-the-art RANTEXT and InferDPT \cite{tong2025inferdpt}. Moreover, to assess performance on NLU tasks, we compare it with the best-performing baseline SnD \cite{mai2024split}.

\textbf{Other Hyperparameters:} For the open-ended text generation task, the length of the soft prompt $r$ is set to 100, while for the NLU tasks, it is set to 20. 
We set the bound $c$ to 0.05 and the reduced dimension $d$ to $b/32$. 
% The soft prompt is optimized using AdamW with learning rates of $1\times10^{-3}$.
The soft prompt and the encoder-decoder module are optimized using AdamW with learning rates of $1\times10^{-3}$ and $1\times10^{-4}$, respectively.

\textbf{Pretraining Details:}
In the experiments, we first pretrain the encoder and decoder components of the privacy mechanism on the public C4 dataset and freeze them during subsequent soft prompt tuning and inference.
For soft prompt tuning, we utilize the training data for each dataset.
To account for scenarios where no training data are publicly available on the server side, we also conduct experiments where the soft prompt is trained on the public C4 dataset and examine its transferability.

\subsection{Performance on Open-ended Text Generation Tasks} 
\begin{figure}[t]
    \centering
    % \vspace{-0.5in}
    \begin{subfigure}[b]{0.23\textwidth}
        \centering
        \includegraphics[width=1\textwidth]{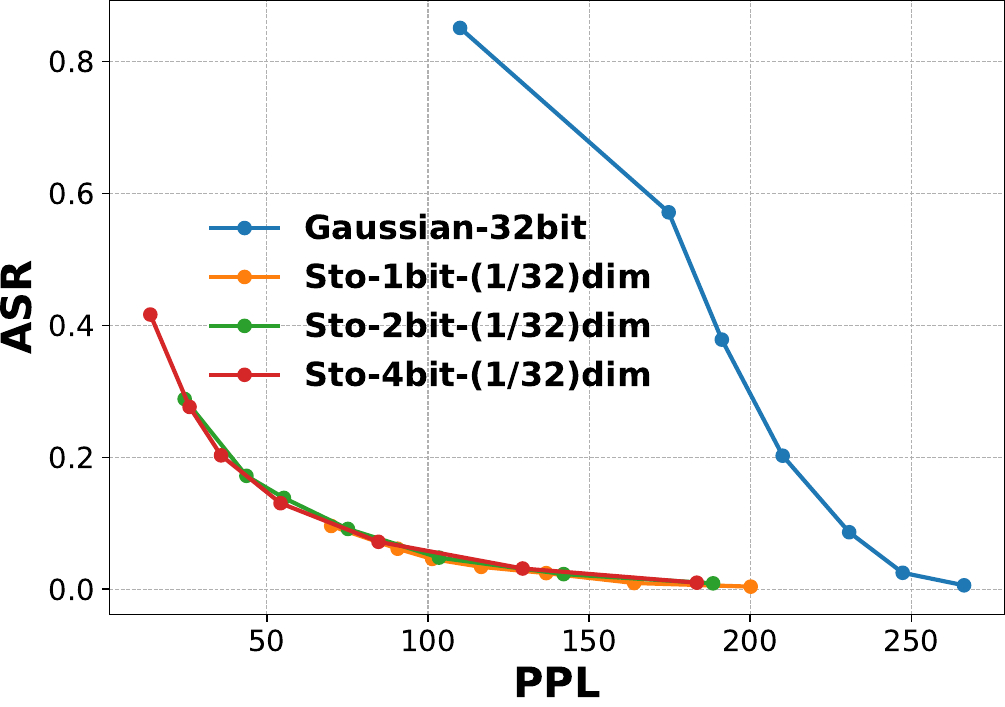}
        % \vspace{-0.5in}
        % \caption{FedAvg}
        \label{fig:fig11}
    \end{subfigure}
    \hfill
    % \vspace{-0.2in}
    \begin{subfigure}[b]{0.23\textwidth}
        \centering
        \includegraphics[width=1\textwidth]{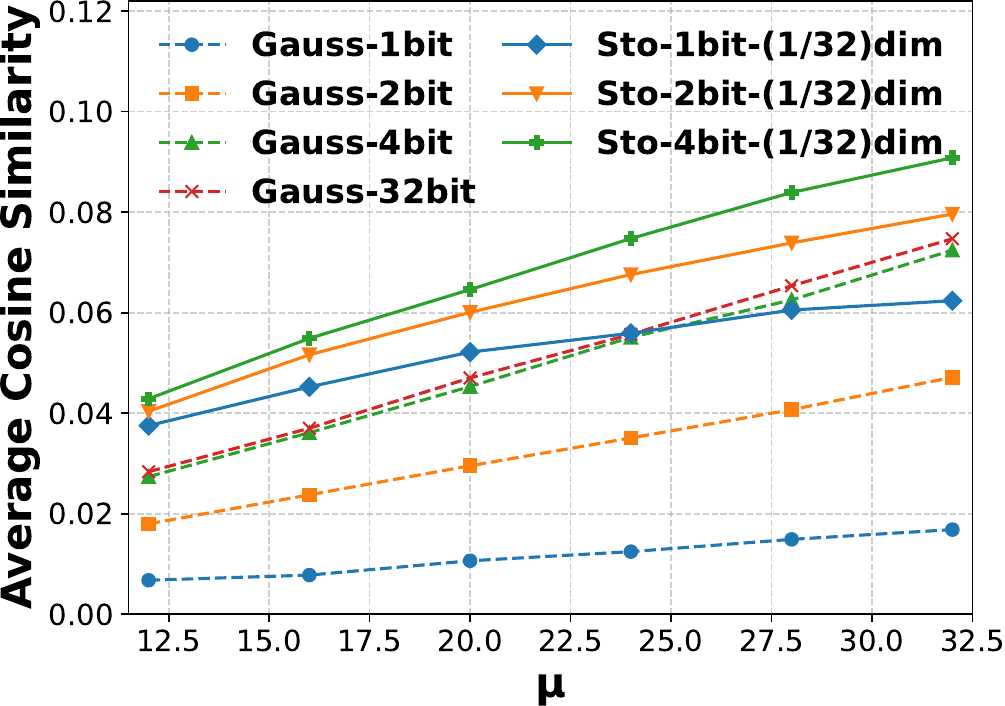}
        % \vspace{-0.5in}
        % \caption{FedSAM}
        \label{fig:fig12}
    \end{subfigure}
    \vspace{-0.2in}
    \caption{Comparison between the Proposed Method and the Gaussian Mechanism on the PTB Dataset using Qwen2.5-7B.}
    \label{fig:fig1}
    \vspace{-0.1in}
\end{figure}

\begin{table}[t]
\centering
% \vspace{-0.1in}
\caption{Performance of LLMs without privacy protection.}
\label{tab:ptb_wikitext}
\scriptsize   % 可以换成 \footnotesize 或 \small
\setlength{\tabcolsep}{6pt}   % 调整列间距，默认6pt，可以再缩小
\renewcommand{\arraystretch}{1.1} % 调整行高
\begin{tabular}{l|cc}
\hline
\multirow{2}{*}{Model} & \multicolumn{2}{c}{Accuracy} \\
\cline{2-3}
 & PTB & Wikitext-2 \\
\hline
Llama3-8B      & 0.651 & 0.773 \\
Qwen2.5-7B     & 0.668 & 0.774 \\
Qwen2.5-72B    & 0.680 & 0.803 \\
DeepSeek-MoE-16B & 0.651 & 0.763 \\
Pangu-Embedded-7B & 0.626 & 0.766 \\
DialoGPT       & 0.588 & 0.520 \\
OPT-1.3B-chat  & 0.599 & 0.695 \\
Llama2-7B-chat & 0.635 & 0.772 \\
\hline
\end{tabular}
\vspace{-0.2in}
\end{table}
We begin by comparing the proposed method with the classical Gaussian mechanism, which is known to be near order-optimal in many parameter regimes. In the Gaussian mechanism, additive noise is directly applied to the token embeddings.
For a fair comparison, we also pretrain soft prompt for the Gaussian mechanism and then evaluate the PPL of both methods on the corresponding evaluation datasets.
\begin{table*}[t]
\centering
\caption{Accuracy (Acc) and AUC comparison on MRPC under different attack success rates.}
\label{table1}
\scriptsize
\setlength{\tabcolsep}{6pt}
\renewcommand{\arraystretch}{1.15}
\begin{tabular}{l|ccc|ccc}
\hline
\multirow{2}{*}{Method} & \multicolumn{3}{c|}{Acc} & \multicolumn{3}{c}{AUC} \\
\cline{2-7}
 & ASR=0.02 & ASR=0.1 & ASR=0.4 & ASR=0.02 & ASR=0.1 & ASR=0.4 \\
\hline
Metric-DP+denoise & \textbf{0.711 ± 0.018} & 0.703 ± 0.006 & 0.694 ± 0.008 & 0.598 ± 0.020 & 0.586 ± 0.019 & 0.585 ± 0.016 \\
Stochastic 4-bit+denoise & 0.710 ± 0.011 & \textbf{0.708 ± 0.007} & 0.695 ± 0.005 & 0.590 ± 0.010 & \textbf{0.595 ± 0.017} & 0.574 ± 0.019 \\
Stochastic 2-bit+denoise & 0.703 ± 0.010 & 0.696 ± 0.006 & 0.703 ± 0.006 & 0.584 ± 0.012 & 0.584 ± 0.012 & 0.579 ± 0.007 \\
Stochastic 1-bit+denoise & 0.701 ± 0.006 & 0.698 ± 0.009 & \textbf{0.703 ± 0.008} & \textbf{0.594 ± 0.006} & 0.578 ± 0.010 & \textbf{0.586 ± 0.001} \\
\hline
\end{tabular}
\vspace{-0.1in}
\end{table*}
\begin{table*}[t]
\centering
\caption{Accuracy (Acc) and AUC comparison on QQP under different attack success rates.}
\label{table2}
\scriptsize
\setlength{\tabcolsep}{6pt}
\renewcommand{\arraystretch}{1.15}
\begin{tabular}{l|ccc|ccc}
\hline
\multirow{2}{*}{Method} & \multicolumn{3}{c|}{Acc} & \multicolumn{3}{c}{AUC} \\
\cline{2-7}
 & ASR=0.02 & ASR=0.1 & ASR=0.4 & ASR=0.02 & ASR=0.1 & ASR=0.4 \\
\hline
Metric-DP+denoise & 0.706 ± 0.016 & 0.698 ± 0.005 & 0.694 ± 0.008 & 0.673 ± 0.019 & 0.659 ± 0.015 & 0.646 ± 0.016 \\
Stochastic 4-bit+denoise+soft & \textbf{0.737 ± 0.008} & \textbf{0.725 ± 0.011} & 0.721 ± 0.010 & \textbf{0.716 ± 0.011} & 0.698 ± 0.007 & 0.694 ± 0.011 \\
Stochastic 2-bit+denoise+soft & 0.727 ± 0.009 & 0.724 ± 0.004 & \textbf{0.738 ± 0.005} & 0.695 ± 0.009 & \textbf{0.706 ± 0.007} & \textbf{0.712 ± 0.013} \\
Stochastic 1-bit+denoise+soft & 0.712 ± 0.015 & 0.724 ± 0.017 & 0.729 ± 0.002 & 0.686 ± 0.012 & 0.694 ± 0.010 & 0.704 ± 0.005 \\
\hline
\end{tabular}
\vspace{-0.2in}
\end{table*}
% , datasets, and LLM backbones, comparing against the Gaussian mechanism, which is order-optimal in many parameter regimes. 
% For different datasets, privacy mechanisms, and privacy guarantees, 
% we pretrain separate soft prompt for both our proposed mechanism with the dimensionality reduction module 
% and the classical Gaussian mechanism, which directly adds Gaussian noise to token embeddings before transmission. 
As shown in the left subfigure of Figure~\ref{fig:fig1}, compared with the classical Gaussian mechanism, the proposed method achieves lower PPL under the same privacy guarantee, where privacy is quantified by ASR of the embedding inversion attack. 
This corresponds to better language modeling performance and demonstrates a more favorable privacy–utility trade-off. 
% Compared with the analysis in Remark \ref{remark_ana}, which shows that stochastic $n$-bit quantization can only approximate the variance of the Gaussian mechanism given the same vector, the superior performance here can be attributed to the use of an encoder that reduces the embedding dimensionality on the server side. This dimensionality reduction effectively mitigates the curse of dimensionality in differential privacy, as discussed in Section \ref{privacy_section}.
In the right subfigure, we incorporate an unbiased stochastic quantizer~\cite{Alistarh2017QSGD} into the Gaussian mechanism to discretize its floating-point outputs. Under the same $\mu$-GDP budget and quantization levels, the proposed method consistently achieves higher cosine similarity between the privatized and original token embeddings than the Gaussian mechanism, indicating better preservation of semantic consistency without compromising privacy.
Moreover, the proposed method substantially reduces communication overhead. Whereas the Gaussian mechanism transmits the 3584-dimensional embeddings for Qwen2.5-7B, the proposed method compresses the embedding dimension by a factor of 32.
% , and the stochastic $n$-bit compressor further reduces the bit width per coordinate to 4, 2, or even 1 bit.

% To the best of our knowledge, this work is the first to adopt $f$-DP to formally guarantee the privacy of user queries in LLM inference. 

Table~\ref{tab:results_transposed} presents the comparison of open-ended text generation tasks.
% , the state-of-the-art approach for privacy-preserving open-ended text generation. 
While RANTEXT and InferDPT adopt the classical $\epsilon$-DP formulation and the proposed method is established under the $f$-DP framework, we ensure comparability by calibrating both mechanisms to the same target ASR. We achieve this by performing a binary search to identify the corresponding $\mu$ (for the proposed method) or $\epsilon$ (for RANTEXT and InferDPT) that yields the desired ASR on the evaluation dataset, and subsequently evaluate the generation quality under the resulting privacy budgets. 
% Following InferDPT, we adopt coherence as the evaluation metric, which quantifies semantic similarity between generated responses and reference answers using a pretrained sentence embedding model~\cite{gao-etal-2021-simcse}. 
To comprehensively assess InferDPT, we incorporate various local models with different capacities, including DialoGPT \cite{zhang-etal-2020-dialogpt}, OPT-1.3B \cite{zhang2022opt}, and Llama2-7B-4bit \cite{frantar2022gptq}. As shown in the table, the performance of InferDPT heavily depends on the capability of the local model, and its coherence scores remain nearly unchanged across different ASR levels despite the varying privacy budgets. For reference, Table~\ref{tab:ptb_wikitext} reports the performance of the same models without perturbation on the input, where InferDPT with any local model achieves performance close to using the local model alone. We attribute this to the instruction prompt used in InferDPT for local extraction (see Appendix \ref{inferspt_prompt}), which includes the original unperturbed prompt so that the local model can effectively bypass the noisy generation of the remote server and rely solely on its own capacity, thereby diminishing the contribution of the remote LLM. In contrast, the proposed mechanism demonstrates a more favorable privacy-utility trade-off. Under weaker privacy requirements, the performance approaches and even surpasses that of the remote LLMs reported in Table~\ref{tab:ptb_wikitext}, 
% and under certain low-privacy settings it even surpasses them, 
highlighting that the proposed method not only preserves user privacy but also fully exploits the capability of remote LLMs, instead of relying on local models. 
To ensure a fair comparison, we further conduct experiments where the proposed mechanism is augmented with local models. Specifically, the generation outputs from the LLM server are fed into the local LLM using the same instruction prompt as in InferDPT, enabling the local model to denoise the generated text. The corresponding results are presented in the right part of Table~\ref{tab:results_transposed}. It can be observed that the proposed method outperforms InferDPT when using the same local model in most cases, further demonstrating its effectiveness. Notably, when the privacy requirement is less stringent (i.e., at higher ASR), DEL outperforms its variant that uses local models. This improvement stems from the stronger capability of the server-side LLM, which DEL can effectively leverage. In contrast, when local models are used, overall task performance is hindered. Additional results on the MoE-architecture model (i.e., DeepSeek-MoE), the larger model (i.e., Qwen2.5-72B), and the Dual-System architecture model (i.e., Pangu-Embedded-7B) are presented in Table~\ref{tab:results_appendix}, which consistently demonstrate that the proposed mechanism achieves a more favorable privacy–utility trade-off.

\subsection{Performance on NLU Tasks} 
While the proposed framework illustrated in Figure~\ref{fig:framework} is primarily designed for open-ended text generation, we evaluate the effectiveness of its key components within the SnD framework on NLU tasks for a more extensive comparison. A detailed description of the SnD setup for NLU tasks is provided in Appendix~\ref{snd_detail}. We begin by comparing the proposed stochastic $n$-bit quantization mechanism with the metric-DP mechanism adopted in SnD. Following the same procedure as described earlier, we fix the ASR of the embedding inversion attack and perform a binary search to determine the corresponding privacy budget for each mechanism, under which the denoising model is subsequently trained.
As shown in Table~\ref{table1}, the proposed privacy mechanism achieves performance comparable to that of SnD while substantially reducing communication overhead. Specifically, the metric-DP mechanism in SnD (implemented via a Laplacian mechanism) demands the user to transmit floating-point embeddings and requires 32 bits per coordinate, whereas the proposed stochastic compressor reduces this requirement to 4, 2, or even 1 bit per coordinate with only marginal degradation in utility.
Furthermore, by incorporating soft prompt to better align the perturbed embeddings before feeding them into the LLM, DEL consistently outperforms SnD in both accuracy and AUC, as reported in Table~\ref{table2}. These results collectively demonstrate that the proposed mechanisms achieve a more favorable trade-off among privacy, communication efficiency, and utility compared to SnD.

\subsection{Transferability of Soft Prompt}

\begin{table}[t]
\centering
\caption{Additional comparison results of open-ended text generation performance under different ASRs. 
All results are reported using the COH score.}
\label{tab:results_appendix}
\scriptsize
\setlength{\tabcolsep}{5pt}
\renewcommand{\arraystretch}{1.15}
\begin{tabular}{l|c|c|c|c}
\hline
\multirow{2}{*}[-0.5ex]{ASR} 
& \multirow{2}{*}[-0.5ex]{\textbf{RANTEXT}}
& \multicolumn{1}{c|}{\textbf{InferDPT}} 
& \multirow{2}{*}[-0.5ex]{\textbf{DEL}}
& \multicolumn{1}{c}{\textbf{DEL}} \\ 
% \cline{3-5}
% & \makecell{w/o local \\ model} 
&
& \makecell{with local \\ OPT-1.3B} 
% & \makecell{w/o local \\ model} 
&
& \makecell{with local \\ OPT-1.3B}  \\ 
\hline
% \multicolumn{5}{c}{\textbf{Pangu-Embedded-7B + PTB}} \\ 
% \hline
% 0.02 & 0.510 & 0.549 & 0.553 & \textbf{0.571} \\
% 0.10 & 0.515 & 0.544 & \textbf{0.633} & 0.614 \\
% 0.15 & 0.516 & 0.552 & \textbf{0.635} & 0.617 \\
% 0.20 & 0.516 & 0.542 & \textbf{0.642} & 0.622 \\
% \midrule
\multicolumn{5}{c}{\textbf{Pangu-Embedded-7B + Wikitext-2}} \\ 
\hline
0.02 & 0.490 & 0.593 & \textbf{0.615} & 0.611 \\
0.10 & 0.494 & 0.578 & \textbf{0.697} & 0.646 \\
0.15 & 0.501 & 0.582 & \textbf{0.727} & 0.659 \\
0.20 & 0.504 & 0.585 & \textbf{0.742} & 0.662 \\
\midrule
% \multicolumn{5}{c}{\textbf{DeepSeek-MoE-16B + PTB}} \\ 
% \hline
% 0.02 & 0.524 & 0.575 & 0.566 & 0.569 \\
% 0.10 & 0.518 & 0.580 & 0.608 & 0.596 \\
% 0.15 & 0.522 & 0.578 & 0.634 & 0.612 \\
% 0.20 & 0.523 & 0.580 & 0.664 & 0.627 \\
% \midrule
\multicolumn{5}{c}{\textbf{DeepSeek-MoE-16B + Wikitext-2}} \\ 
\hline
0.02 & 0.470 & \textbf{0.664} & 0.586 & 0.632 \\
0.10 & 0.479 & 0.666 & \textbf{0.705} & 0.663 \\
0.15 & 0.476 & 0.665 & \textbf{0.714} & 0.667 \\
0.20 & 0.486 & 0.654 & \textbf{0.737} & 0.673 \\
\midrule
% \multicolumn{5}{c}{\textbf{Qwen2.5-72B + PTB}} \\ 
% \hline
% 0.02 & 0.526 & 0.563 & 0.614 & 0.603 \\
% 0.10 & 0.527 & 0.570 & 0.649 & 0.620 \\
% 0.15 & 0.528 & 0.564 & 0.666 & 0.632 \\
% 0.20 & 0.533 & 0.566 & 0.668 & 0.635 \\
% \midrule
\multicolumn{5}{c}{\textbf{Qwen2.5-72B + Wikitext-2}} \\ 
\hline
0.02 & 0.482 & \textbf{0.648} & 0.606 & 0.603 \\
0.10 & 0.489 & 0.651 & \textbf{0.769} & 0.682 \\
0.15 & 0.487 & 0.654 & \textbf{0.779} & 0.682 \\
0.20 & 0.492 & 0.656 & \textbf{0.811} & 0.710 \\
\hline
\end{tabular}
\vspace{-0.1in}
\end{table}

\begin{table}[t]
\centering
\caption{COH comparison between the proposed method and baselines with soft prompt trained on the public C4 dataset.}
\label{tab:ours_vs_inferdpt_c4}
\scriptsize
\setlength{\tabcolsep}{4pt}
\renewcommand{\arraystretch}{1.15}
\begin{tabular}{l|c|cc|c|cc}
\hline
\multirow{2}{*}[-0.5ex]{\textbf{ASR}} &
\multirow{2}{*}[-0.5ex]{\textbf{RANTEXT}} &
\multicolumn{2}{c|}{\textbf{InferDPT}} &
\multirow{2}{*}[-0.5ex]{\textbf{DEL}} &
\multicolumn{2}{c}{\textbf{DEL + Local Model}} \\
\cline{3-4} \cline{6-7}
 % & \makecell{w/o local \\ model}
 &
 & \makecell{with local \\ DialoGPT}
 & \makecell{with local \\ OPT-1.3B}
 &
 & \textbf{DialoGPT} & \textbf{OPT-1.3B} \\
\hline
\multicolumn{7}{c}{\textbf{Llama3 + PTB}} \\
\hline
0.02 & 0.517 & 0.581 & 0.565 & 0.523 & \textbf{0.589} & 0.542 \\
0.10 & 0.520 & 0.585 & 0.567 & 0.554 & \textbf{0.592} & 0.574 \\
0.15 & 0.522 & 0.584 & 0.566 & 0.587 & \textbf{0.605} & 0.591 \\
0.20 & 0.526 & 0.583 & 0.561 & \textbf{0.618} & 0.593 & 0.595 \\
\hline
\multicolumn{7}{c}{\textbf{Llama3 + Wikitext-2}} \\
\hline
0.02 & 0.477 & 0.515 & \textbf{0.665} & 0.514 & 0.520 & 0.594 \\
0.10 & 0.492 & 0.517 & 0.639 & \textbf{0.642} & 0.519 & 0.642 \\
0.15 & 0.493 & 0.518 & 0.648 & \textbf{0.735} & 0.523 & 0.689 \\
0.20 & 0.492 & 0.519 & 0.658 & \textbf{0.756} & 0.521 & 0.701 \\
\hline
\multicolumn{7}{c}{\textbf{Qwen2.5 + PTB}} \\
\hline
0.02 & 0.517 & 0.570 & 0.558 & 0.537 & \textbf{0.589} & 0.548 \\
0.10 & 0.521 & 0.575 & 0.568 & \textbf{0.611} & 0.591 & 0.601 \\
0.15 & 0.526 & 0.574 & 0.561 & \textbf{0.625} & 0.594 & 0.596 \\
0.20 & 0.525 & 0.571 & 0.565 & \textbf{0.625} & 0.588 & 0.609 \\
\hline
\multicolumn{7}{c}{\textbf{Qwen2.5 + Wikitext-2}} \\
\hline
0.02 & 0.486 & 0.515 & \textbf{0.649} & 0.602 & 0.520 & 0.610 \\
0.10 & 0.489 & 0.514 & 0.638 & \textbf{0.746} & 0.522 & 0.697 \\
0.15 & 0.493 & 0.516 & 0.640 & \textbf{0.763} & 0.522 & 0.676 \\
0.20 & 0.493 & 0.516 & 0.642 & \textbf{0.768} & 0.521 & 0.695 \\
\hline
\end{tabular}
\vspace{-0.25in}
\end{table}

% While the proposed mechanism demonstrates a remarkable privacy–utility trade-off in previous experiments, the training splits of the evaluation datasets are not always accessible to the server due to the personalized nature of users’ query styles. 
This section examines the transferability of soft prompt across different datasets and privacy requirement $\mu$.
As shown in Table~\ref{tab:ours_vs_inferdpt_c4}, when the soft prompt is trained on the public C4 dataset and then transferred to other evaluation datasets, the proposed framework still exhibits a better trade-off between privacy and utility than InferDPT. This observation highlights the strong cross-dataset transferability of soft prompt. Furthermore, the results imply better transferability under weaker privacy conditions. For instance, compared with Table \ref{tab:results_transposed}, when using Qwen2.5-7B as the LLM and Wikitext-2 as the evaluation dataset, the coherence at ASR = 0.2 decreases by only 0.032, whereas under stronger privacy (ASR = 0.02), it declines by 0.088. As shown in Table~\ref{tab:soft_mu_transfer}, the soft prompt also exhibits strong transferability across different $\mu$ values.

% \subsection{Ablation Study}
% In this section, we present ablation studies to examine the impacts of the key hyperparameters. Unless otherwise stated, experiments are conducted on WikiText-2 and Llama3-8B. 

\begin{table}[t]
\centering
\caption{Transfer performance (PPL) of soft prompt trained with $\mu=52$ when transferred to other $\mu$ values on LLaMA3-8B with Wikitext2.}
\label{tab:soft_mu_transfer}
\scriptsize
\setlength{\tabcolsep}{3.5pt}  % 缩小列间距
\renewcommand{\arraystretch}{1.1}  % 稍微紧凑
\begin{tabular}{c|cccccc}
\hline
\textbf{$\mu$} & \textbf{40} & \textbf{44} & \textbf{48} & \textbf{52} & \textbf{56} & \textbf{60}\\
\hline
\makecell{soft prompt \\ trained at target $\mu$} & 32.09 & 23.91 & 19.58 & 16.76 & 14.78 & 13.30\\
\hline
\makecell{soft prompt \\ trained at $\mu=52$} & 35.91 & 25.19 & 19.83 & 16.76 & 14.82 & 13.54\\
\hline
No soft prompt & 4279.84 & 3749.72 & 3215.54 & 2738.83 & 2337.73 & 1999.32\\
\hline
\end{tabular}
\vspace{-0.1in}
\end{table}

\begin{table}[t]
\centering
\caption{Performance (PPL) comparison of the latent dimension $d$ on WikiText-2 and Llama3-8B.}
\label{tab:d_values}
\scriptsize
\setlength{\tabcolsep}{5pt}
\renewcommand{\arraystretch}{1.2}
\begin{tabular}{c|cccccc}
\hline
\textbf{$\mu$} & \textbf{20} & \textbf{28} & \textbf{36} & \textbf{44} & \textbf{52} & \textbf{60} \\
\hline
$d=32~(b/128)$ & \textbf{116.01} & \textbf{58.89} & 40.80 & 30.63 & 26.68 & 24.08 \\
\hline
$d=128~(b/32)$ & 195.75 & 88.62 & \textbf{39.42} & \textbf{23.90} & \textbf{16.76} & \textbf{13.30} \\
\hline
$d=512~(b/8)$ & 327.84 & 235.39 & 157.68 & 99.78 & 58.50 & 37.75 \\
\hline
$d=4096~(b)$ & 426.23 & 420.61 & 407.47 & 337.42 & 303.89 & 262.11 \\
\hline
\end{tabular}
\vspace{-0.2in}
\end{table}

% \textbf{Impact of latent dimension $d$:}
\subsection{Impact of Latent Dimension}
Table~\ref{tab:d_values} presents the performance across different latent dimensions $d$.
Under tighter privacy regimes (e.g., $\mu\in\{20,28\}$), a smaller $d$ requires a smaller amount of perturbation to satisfy a fixed differential privacy level, and thus yields better utility despite the reduced semantic capacity.
As the privacy requirement becomes less stringent (i.e., a larger $\mu$), higher-dimensional embeddings (e.g., $d=128$) may capture more semantic information.
However, excessively large dimensions (e.g., $d=512$) still degrade performance due to the large amount of perturbation required to satisfy the same differential privacy level. These results align with the analysis in Section~\ref{privacy_section}, and $d=b/32$ is adopted as a balanced choice.

\section{Conclusion}
We presented DEL, a differentially private and utility-preserving framework for LLM inference. By integrating an encoder–decoder-based dimensionality reduction module with a stochastic $n$-bit quantization mechanism, and further restoring utility through server-side soft prompt, DEL achieved a favorable trade-off between privacy and utility. Experimental results on both text generation and understanding tasks validated the effectiveness of the proposed method.

\section*{Impact Statement}
This paper presents work whose goal is to advance the field of Machine Learning. It addresses privacy risks that arise when user's raw prompts are transmitted to widely deployed cloud-based LLM inference services. A differentially private and communication-efficient mechanism is introduced to protect the original user input, alongside a soft prompt-based utility-preserving framework that maintains inference quality under perturbed inputs. The proposed method achieves a stronger privacy–utility trade-off than prior state-of-the-art methods, while avoiding reliance on local post-processing models on resource-constrained user devices. These advantages may help make private LLM inference more practical and scalable for real-world applications.

% Authors are \textbf{required} to include a statement of the potential 
% broader impact of their work, including its ethical aspects and future 
% societal consequences. This statement should be in an unnumbered 
% section at the end of the paper (co-located with Acknowledgements -- 
% the two may appear in either order, but both must be before References), 
% and does not count toward the paper page limit. In many cases, where 
% the ethical impacts and expected societal implications are those that 
% are well established when advancing the field of Machine Learning, 
% substantial discussion is not required, and a simple statement such 
% as the following will suffice:

% ``This paper presents work whose goal is to advance the field of 
% Machine Learning. There are many potential societal consequences 
% of our work, none which we feel must be specifically highlighted here.''

% The above statement can be used verbatim in such cases, but we 
% encourage authors to think about whether there is content which does 
% warrant further discussion, as this statement will be apparent if the 
% paper is later flagged for ethics review.

% In the unusual situation where you want a paper to appear in the
% references without citing it in the main text, use \nocite

\bibliography{ref_llm}

@InProceedings{wu2025cape,
  title     = {Cape: Context-Aware Prompt Perturbation Mechanism with Differential Privacy},
  author    = {Wu, Haoqi and Dai, Wei and Li, Wang and Yan, Qiang},
  booktitle = {Proceedings of the 42nd International Conference on Machine Learning},
  pages     = {67184--67201},
  year      = {2025},
  volume    = {267},
  publisher = {PMLR},
}

@inproceedings{brown2020language,
 author = {Brown, Tom and Mann, Benjamin and Ryder, Nick and Subbiah, Melanie and Kaplan, Jared D and Dhariwal, Prafulla and Neelakantan, Arvind and Shyam, Pranav and Sastry, Girish and Askell, Amanda and Agarwal, Sandhini and Herbert-Voss, Ariel and Krueger, Gretchen and Henighan, Tom and Child, Rewon and Ramesh, Aditya and Ziegler, Daniel and Wu, Jeffrey and Winter, Clemens and Hesse, Chris and Chen, Mark and Sigler, Eric and Litwin, Mateusz and Gray, Scott and Chess, Benjamin and Clark, Jack and Berner, Christopher and McCandlish, Sam and Radford, Alec and Sutskever, Ilya and Amodei, Dario},
 booktitle = {Advances in Neural Information Processing Systems},
 pages = {1877--1901},
 title = {Language Models are Few-Shot Learners},
 volume = {33},
 year = {2020}
}

@inproceedings{duan2023privacy,
  title     = {Flocks of Stochastic Parrots: Differentially Private Prompt Learning for Large Language Models},
  author    = {Duan, Haonan and Dziedzic, Adam and Papernot, Nicolas and Boenisch, Franziska},
  booktitle = {Advances in Neural Information Processing Systems (NeurIPS)},
  year      = {2023},
}

@inproceedings{yue2021differential,
  title     = {Differential Privacy for Text Analytics via Natural Text Sanitization},
  author    = {Yue, Xiang and Du, Minxin and Wang, Tianhao and Li, Yaliang and Sun, Huan and Chow, Sherman S. M.},
  booktitle = {Findings of the Association for Computational Linguistics: ACL-IJCNLP 2021},
  month     = aug,
  year      = {2021},
  address   = {Online},
  publisher = {Association for Computational Linguistics},
  doi       = {10.18653/v1/2021.findings-acl.337},
  pages     = {3853--3866}
}

@inproceedings{chen2023customized,
  title     = {A Customized Text Sanitization Mechanism with Differential Privacy},
  author    = {Chen, Sai and Mo, Fengran and Wang, Yanhao and Chen, Cen and Nie, Jian-Yun and Wang, Chengyu and Cui, Jamie},
  booktitle = {Findings of the Association for Computational Linguistics: ACL 2023},
  month     = jul,
  year      = {2023},
  address   = {Toronto, Canada},
  publisher = {Association for Computational Linguistics},
  doi       = {10.18653/v1/2023.findings-acl.355},
  pages     = {5747--5758}
}

@inproceedings{tang2024privacy_preserving_icl,
  title     = {Privacy-Preserving In-Context Learning with Differentially Private Few-Shot Generation},
  author    = {Tang, Xinyu and Shin, Richard and Inan, Huseyin A. and Manoel, Andre and Mireshghallah, Fatemehsadat and Lin, Zinan and Gopi, Sivakanth and Kulkarni, Janardhan and Sim, Robert},
  booktitle = {International Conference on Learning Representations (ICLR)},
  year      = {2024},
}

@inproceedings{hermann2015teaching,
  title={Teaching machines to read and comprehend},
  author={Hermann, Karl Moritz and Ko{\v{c}}isk{\'y}, Tom{\'a}{\v{s}} and Grefenstette, Edward and Espeholt, Lasse and Kay, Will and Suleyman, Mustafa and Blunsom, Phil},
  booktitle={Advances in Neural Information Processing Systems},
  volume={28},
  pages={1693--1701},
  year={2015},
}

@article{zhang2022opt,
  title={OPT: Open Pre-trained Transformer Language Models},
  author={Zhang, Susan and Roller, Stephen and Goyal, Naman and Artetxe, Mikel and Chen, Moya and Chen, Shuohui and Dewan, Christopher and Diab, Mona and Li, Xian and Lin, Xi Victoria and Mihaylov, Todor and Ott, Myle and Shleifer, Sam and Shuster, Kurt and Simig, Daniel and Koura, Punit Singh and Sridhar, Anjali and Wang, Tianlu and Zettlemoyer, Luke},
  journal={arXiv preprint arXiv:2205.01068},
  year={2022},
}

@article{liu2023llms,
  title={LLMs Can Understand Encrypted Prompt: Towards Privacy-Computing Friendly Transformers},
  author={Liu, Xuanqi and Liu, Zhuotao},
  journal={arXiv preprint arXiv:2305.18396},
  year={2023},
}

@article{hou2023ciphergpt,
  title={Ciphergpt: Secure two-party gpt inference},
  author={Hou, Xiaoyang and Liu, Jian and Li, Jingyu and Li, Yuhan and Lu, Wen-jie and Hong, Cheng and Ren, Kui},
  journal={Cryptology ePrint Archive},
  year={2023}
}

@inproceedings{mai2024split,
  title={Split-and-denoise: Protect large language model inference with local differential privacy},
  author={Mai, Peihua and Yan, Ran and Huang, Zhe and Yang, Youjia and Pang, Yan},
  booktitle={Proceedings of the 41st International Conference on Machine Learning},
  pages={33440--33460},
  year={2024},
  volume={235},
  series={Proceedings of Machine Learning Research},
  publisher={PMLR},
}

@article{tong2025inferdpt,
  title={InferDPT: Privacy-preserving Inference for Black-box Large Language Models},
  author={Tong, Meng and Chen, Kejiang and Zhang, Jia and others},
  journal={IEEE Transactions on Dependable and Secure Computing},
  year={2025},
  publisher={IEEE},
}

@inproceedings{Hoory2021LearningAE,
  title     = {Learning and Evaluating a Differentially Private Pre-trained Language Model},
  author    = {Hoory, Shlomo and Feder, Amir and Tendler, Avichai and Erell, Sofia and Peled-Cohen, Alon and Laish, Itay and Nakhost, Hootan and Stemmer, Uri and Benjamini, Ayelet and Hassidim, Avinatan and Matias, Yossi},
  booktitle = {Findings of the Association for Computational Linguistics: EMNLP 2021},
  year      = {2021},
  month     = nov,
  pages     = {1178--1189},
  publisher = {Association for Computational Linguistics},
  doi       = {10.18653/v1/2021.findings-emnlp.102}
}

@inproceedings{yin-habernal-2022-privacy,
    title = "Privacy-Preserving Models for Legal Natural Language Processing",
    author = "Yin, Ying  and
      Habernal, Ivan",
    booktitle = "Proceedings of the Natural Legal Language Processing Workshop 2022",
    month = dec,
    year = "2022",
    doi = "10.18653/v1/2022.nllp-1.14",
    pages = "172--183",
}

@InProceedings{Ganesh2023WhyIP,
  title = 	 {Why Is Public Pretraining Necessary for Private Model Training?},
  author =       {Ganesh, Arun and Haghifam, Mahdi and Nasr, Milad and Oh, Sewoong and Steinke, Thomas and Thakkar, Om and Guha Thakurta, Abhradeep and Wang, Lun},
  booktitle = 	 {Proceedings of the 40th International Conference on Machine Learning},
  pages = 	 {10611--10627},
  year = 	 {2023},
  volume = 	 {202},
  series = 	 {Proceedings of Machine Learning Research},
  month = 	 {23--29 Jul},
  publisher =    {PMLR},
}

@inproceedings{Li2021LargeLM,
title={Large Language Models Can Be Strong Differentially Private Learners},
author={Xuechen Li and Florian Tramer and Percy Liang and Tatsunori Hashimoto},
booktitle={International Conference on Learning Representations},
year={2022},
}

@InProceedings{Bu2022DifferentiallyPO,
  title = 	 {Differentially Private Optimization on Large Model at Small Cost},
  author =       {Bu, Zhiqi and Wang, Yu-Xiang and Zha, Sheng and Karypis, George},
  booktitle = 	 {Proceedings of the 40th International Conference on Machine Learning},
  pages = 	 {3192--3218},
  year = 	 {2023},
  volume = 	 {202},
  series = 	 {Proceedings of Machine Learning Research},
  month = 	 {23--29 Jul},
  publisher =    {PMLR},
}

@article{kurakin2023harnessing,
  title={Harnessing large-language models to generate private synthetic text},
  author={Kurakin, Alexey and Ponomareva, Natalia and Syed, Umar and MacDermed, Liam and Terzis, Andreas},
  journal={arXiv preprint arXiv:2306.01684},
  year={2023}
}

@inproceedings{Yu2021DifferentiallyPF,
title={Differentially Private Fine-tuning of Language Models},
author={Da Yu and Saurabh Naik and Arturs Backurs and Sivakanth Gopi and Huseyin A Inan and Gautam Kamath and Janardhan Kulkarni and Yin Tat Lee and Andre Manoel and Lukas Wutschitz and Sergey Yekhanin and Huishuai Zhang},
booktitle={International Conference on Learning Representations},
year={2022},
}

@article{Li2023PrivacyPreservingPT,
  title={Privacy-Preserving Prompt Tuning for Large Language Model Services},
  author={Yansong Li and Zhixing Tan and Yang Liu},
  journal={arXiv preprint arXiv:2305.06212},
  year={2023}
}

@article{Chen2023HideAS,
  title={Hide and Seek (HaS): A Lightweight Framework for Prompt Privacy Protection},
  author={Yu Chen and Tingxin Li and Huiming Liu and Yang Yu},
  journal={arXiv preprint arXiv:2309.03057},
  year={2023},
}

@article{li2021prefix,
  title={Prefix-tuning: Optimizing continuous prompts for generation},
  author={Li, Xiang Lisa and Liang, Percy},
  journal={arXiv preprint arXiv:2101.00190},
  year={2021}
}

@article{shin2020autoprompt,
  title={AutoPrompt: Eliciting Knowledge from Language Models with Automatically Generated Prompts},
  author={Shin, Taylor and Razeghi, Yasaman and Logan IV, Robert L. and Wallace, Eric and Singh, Sameer},
  journal={arXiv preprint arXiv:2010.15980},
  year={2020}
}

@article{lester2021power,
  title={The Power of Scale for Parameter-Efficient Prompt Tuning},
  author={Lester, Brian and Al-Rfou, Rami and Constant, Noah},
  journal={arXiv preprint arXiv:2104.08691},
  year={2021}
}

@article{wu2023infoprompt,
  title={Infoprompt: Information-theoretic soft prompt tuning for natural language understanding},
  author={Wu, Junda and Yu, Tong and Wang, Rui and Song, Zhao and Zhang, Ruiyi and Zhao, Handong and Lu, Chaochao and Li, Shuai and Henao, Ricardo},
  journal={Advances in Neural Information Processing Systems},
  volume={36},
  pages={61060--61084},
  year={2023}
}

@inproceedings{liu2024can,
  title={Can we soft prompt LLMs for graph learning tasks?},
  author={Liu, Zheyuan and He, Xiaoxin and Tian, Yijun and Chawla, Nitesh V},
  booktitle={Companion Proceedings of the ACM Web Conference 2024},
  pages={481--484},
  year={2024}
}

@inproceedings{xu2024soft,
  title={Soft prompt recovers compressed llms, transferably},
  author={Xu, Zhaozhuo and Liu, Zirui and Chen, Beidi and Zhong, Shaochen and Tang, Yuxin and Wang, Jue and Zhou, Kaixiong and Hu, Xia and Shrivastava, Anshumali},
  booktitle={Forty-first International Conference on Machine Learning},
  year={2024}
}

@article{su2021transferability,
  title={On transferability of prompt tuning for natural language processing},
  author={Su, Yusheng and Wang, Xiaozhi and Qin, Yujia and Chan, Chi-Min and Lin, Yankai and Wang, Huadong and Wen, Kaiyue and Liu, Zhiyuan and Li, Peng and Li, Juanzi and others},
  journal={arXiv preprint arXiv:2111.06719},
  year={2021}
}

@article{liu2024mitigating,
  title        = {Mitigating Privacy Risks in LLM Embeddings from Embedding Inversion},
  author       = {Liu, Tiantian and Yao, Hongwei and Wu, Tong and Qin, Zhan and Lin, Feng and Ren, Kui and Chen, Chun},
  journal      = {arXiv preprint arXiv:2411.05034},
  year         = {2024},
}

@article{chen2024text,
  title        = {Text Embedding Inversion Security for Multilingual Language Models},
  author       = {Chen, Yiyi and Lent, Heather and Bjerva, Johannes},
  journal      = {arXiv preprint arXiv:2401.12192},
  year         = {2024},
}

@article{wan2024information,
  title        = {Information Leakage from Embedding in Large Language Models},
  author       = {Wan, Zhipeng and Cheng, Anda and Wang, Yinggui and Wang, Lei},
  journal      = {arXiv preprint arXiv:2405.11916},
  year         = {2024},
}

@article{dong2022gaussian,
  title={Gaussian Differential Privacy},
  author={Dong, Jinshuo and Roth, Aaron and Su, Weijie J.},
  journal={Journal of the Royal Statistical Society: Series B (Statistical Methodology)},
  volume={84},
  number={1},
  pages={3--37},
  year={2022},
  publisher={Wiley},
  doi={10.1111/rssb.12454}
}

@inproceedings{abadi2016deep,
  title={Deep learning with differential privacy},
  author={Abadi, Martin and Chu, Andy and Goodfellow, Ian and McMahan, H Brendan and Mironov, Ilya and Talwar, Kunal and Zhang, Li},
  booktitle={Proceedings of the 2016 ACM SIGSAC Conference on Computer and Communications Security},
  pages={308--318},
  year={2016}
}

@inproceedings{chen2023privacy,
  title     = {Privacy Amplification via Compression: Achieving the Optimal Privacy-Accuracy-Communication Trade-off in Distributed Mean Estimation},
  author    = {Chen, Wei-Ning and Song, Dan and Özgür, Ayfer and Kairouz, Peter},
  booktitle = {Advances in Neural Information Processing Systems},
  year      = {2023},
}

@book{lehmann2005testing,
  title={Testing statistical hypotheses},
  author={Lehmann, Erich Leo and Romano, Joseph P and Casella, George},
  volume={3},
  year={2005},
  publisher={Springer}
}

@article{Raffel2020T5,
  author  = {Colin Raffel and Noam Shazeer and Adam Roberts and Katherine Lee and
             Sharan Narang and Michael Matena and Yanqi Zhou and Wei Li and Peter J. Liu},
  title   = {Exploring the Limits of Transfer Learning with a Unified Text-to-Text Transformer},
  journal = {Journal of Machine Learning Research},
  year    = {2020},
  volume  = {21},
  number  = {140},
  pages   = {1--67},
}

@inproceedings{Merity2017PointerSentinel,
  author    = {Stephen Merity and Caiming Xiong and James Bradbury and Richard Socher},
  title     = {Pointer Sentinel Mixture Models},
  booktitle = {Proceedings of the 5th International Conference on Learning Representations (ICLR)},
  year      = {2017},
  month     = {April},
}

@inproceedings{marcus-etal-1994-penn,
  title     = {The {Penn} {Treebank}: Annotating Predicate Argument Structure},
  author    = {Mitchell Marcus and Grace Kim and Mary Ann Marcinkiewicz and Robert MacIntyre and Ann Bies and Mark Ferguson and Karen Katz and Britta Schasberger},
  booktitle = {Human Language Technology: Proceedings of a Workshop held at Plainsboro, New Jersey, March 8-11, 1994},
  year      = {1994},
}

@misc{Touvron2023,
  title        = {Llama 2: Open Foundation and Fine-Tuned Chat Models},
  author       = {Hugo Touvron and Louis Martin and Kevin R. Stone and Peter Albert
                  and Amjad Almahairi and Yasmine Babaei and Nikolay Bashlykov
                  and Soumya Batra and Prajjwal Bhargava and Shruti Bhosale
                  and Dan Bikel and Lukas Blecher and Cristian Cantón Ferrer
                  and Moya Chen and Guillem Cucurull and David Esiobu and Jude Fernandes
                  and Jeremy Fu and Wenyin Fu and Brian Fuller and Cynthia Gao
                  and Vedanuj Goswami and Naman Goyal and Anthony Hartshorn
                  and Saghar Hosseini and Rui Hou and Hakan Inan and Marcin Kardas
                  and Viktor Kerkez and Madian Khabsa and Isabel M. Kloumann
                  and Artem Korenev and Punit Singh Koura and Marie-Anne Lachaux
                  and Thibaut Lavril and Jenya Lee and Diana Liskovich and Yinghai Lu
                  and Yuning Mao and Xavier Martinet and Todor Mihaylov and Pushkar Mishra
                  and Igor Molybog and Yixin Nie and Andrew Poulton and Jeremy Reizenstein
                  and Rashi Rungta and Kalyan Saladi and Alan Schelten and Ruan Silva
                  and Eric Michael Smith and Ranjan Subramanian and Xiaoqing Ellen Tan
                  and Binh Tang and Ross Taylor and Adina Williams
                  and Jian Xiang Kuan and Puxin Xu and Zheng Yan and Iliyan Zarov
                  and Yuchen Zhang and Angela Fan and Melanie Kambadur and Sharan Narang
                  and Aurélien Rodriguez and Robert Stojnic and Sergey Edunov and Thomas Scialom},
  year         = {2023},
  month        = {Jul},
  day          = {18},
}

@misc{qwen2025qwen25,
  title        = {Qwen2.5 Technical Report},
  author       = {Qwen and An Yang and Baosong Yang and Beichen Zhang and Binyuan Hui and Bo Zheng and Bowen Yu and Chengyuan Li and Dayiheng Liu and Fei Huang and Haoran Wei and Huan Lin and Jian Yang and Jianhong Tu and Jianwei Zhang and Jianxin Yang and Jiaxi Yang and Jingren Zhou and Junyang Lin and Kai Dang and Keming Lu and Keqin Bao and Kexin Yang and Le Yu and Mei Li and Mingfeng Xue and Pei Zhang and Qin Zhu and Rui Men and Runji Lin and Tianhao Li and Tianyi Tang and Tingyu Xia and Xingzhang Ren and Xuancheng Ren and Yang Fan and Yang Su and Yichang Zhang and Yu Wan and Yuqiong Liu and Zeyu Cui and Zhenru Zhang and Zihan Qiu},
  year         = {2025},
  month        = {January},
  doi          = {10.48550/arXiv.2412.15115}
}

@inproceedings{gao-etal-2021-simcse,
  title     = {SimCSE: Simple Contrastive Learning of Sentence Embeddings},
  author    = {Tianyu Gao and Xingcheng Yao and Danqi Chen},
  booktitle = {Proceedings of the 2021 Conference on Empirical Methods in Natural Language Processing},
  month     = nov,
  year      = {2021},
  address   = {Online and Punta Cana, Dominican Republic},
  publisher = {Association for Computational Linguistics},
  pages     = {6894--6910},
  doi       = {10.18653/v1/2021.emnlp-main.552},
}

@misc{quora2017qqp,
  author  = {Iyer, Shankar and Dandekar, Nikhil and Csernai, Korn{\'e}l},
  title   = {First Quora Dataset Release: Question Pairs},
  year    = {2017},
}

@inproceedings{dolan-brockett-2005-automatically,
  title     = {Automatically Constructing a Corpus of Sentential Paraphrases},
  author    = {William B. Dolan and Chris Brockett},
  booktitle = {Proceedings of the Third International Workshop on Paraphrasing (IWP2005)},
  year      = {2005},
}

@inproceedings{devlin2019bert,
  title     = {BERT: Pre-training of Deep Bidirectional Transformers for Language Understanding},
  author    = {Devlin, Jacob and Chang, Ming-Wei and Lee, Kenton and Toutanova, Kristina},
  booktitle = {Proceedings of the 2019 Conference of the North American Chapter of the Association for Computational Linguistics: Human Language Technologies, Volume 1 (Long and Short Papers)},
  year      = {2019},
  pages     = {4171--4186},
  publisher = {Association for Computational Linguistics},
  doi       = {10.18653/v1/N19-1423},
}

@inproceedings{zhang-etal-2020-dialogpt,
  title     = {DIALOGPT : Large-Scale Generative Pre-training for Conversational Response Generation},
  author    = {Zhang, Yizhe and Sun, Siqi and Galley, Michel and Chen, Yen-Chun and Brockett, Chris and Gao, Xiang and Gao, Jianfeng and Liu, Jingjing and Dolan, Bill},
  booktitle = {Proceedings of the 58th Annual Meeting of the Association for Computational Linguistics: System Demonstrations},
  month     = jul,
  year      = {2020},
  address   = {Online},
  publisher = {Association for Computational Linguistics},
  doi       = {10.18653/v1/2020.acl-demos.30},
  pages     = {270--278}
}

@article{LLMSurvey,
  title   = {A Survey of Large Language Models},
  author  = {Zhao, Wayne Xin and Zhou, Kun and Li, Junyi and Tang, Tianyi and Wang, Xiaolei and Hou, Yupeng and Min, Yingqian and Zhang, Beichen and Zhang, Junjie and Dong, Zican and Du, Yifan and Yang, Chen and Chen, Yushuo and Chen, Zhipeng and Jiang, Jinhao and Ren, Ruiyang and Li, Yifan and Tang, Xinyu and Liu, Zikang and Liu, Peiyu and Nie, Jian-Yun and Wen, Ji-Rong},
  journal = {arXiv preprint arXiv:2303.18223},
  year    = {2023},
}

@article{frantar2022gptq,
  title   = {GPTQ: Accurate Post-Training Quantization for Generative Pre-trained Transformers},
  author  = {Frantar, Elias and Ashkboos, Saleh and Hoefler, Torsten and Alistarh, Dan},
  journal = {arXiv preprint arXiv:2210.17323},
  year    = {2022},
}

@inproceedings{liu2021highdimdp,
  title={Differential Privacy and Robust Statistics in High Dimensions},
  author={Liu, Jing and Duchi, John C. and Jordan, Michael I.},
  booktitle={Proceedings of the AAAI Privacy-Preserving Artificial Intelligence Workshop (PPAI)},
  year={2021}
}

@article{liu2023csur,
  title   = {Pre-train, Prompt, and Predict: A Systematic Survey of Prompting Methods in Natural Language Processing},
  author  = {Liu, Pengfei and Yuan, Weizhe and Fu, Jinlan and Jiang, Zhengbao and Hayashi, Hiroaki and Neubig, Graham},
  journal = {ACM Computing Surveys},
  volume  = {55},
  number  = {9},
  pages   = {1--35},
  year    = {2023},
  doi     = {10.1145/3560815},
}

@inproceedings{Dai2024DeepSeekMoE,
  author    = {Damai Dai and Chengqi Deng and Chenggang Zhao and R.\,X.\ Xu and Huazuo Gao and Deli Chen and Jiashi Li and Wangding Zeng and Xingkai Yu and Y.\,Wu and Zhenda Xie and Y.\,K.\ Li and Panpan Huang and Fuli Luo and Chong Ruan and Zhifang Sui and Wenfeng Liang},
  title     = {DeepSeekMoE: Towards Ultimate Expert Specialization in Mixture-of-Experts Language Models},
  booktitle = {Proceedings of the 62nd Annual Meeting of the Association for Computational Linguistics (ACL 2024) – Volume 1: Long Papers},
  pages     = {1280--1297},
  year      = {2024},
  address   = {Bangkok, Thailand},
  publisher = {Association for Computational Linguistics},
  doi       = {10.18653/v1/2024.acl-long.70},
}

@article{Chen2025PanguEmbedded,
  author    = {Hanting Chen and Yasheng Wang and Kai Han and Dong Li and Lin Li and Zhenni Bi and Jinpeng Li and Haoyu Wang and Fei Mi and Mingjian Zhu and Bin Wang and Kaikai Song and Yifei Fu and Xu He and Yu Luo and Chong Zhu and Quan He and Xueyu Wu and Wei He and Hailin Hu and Yehui Tang and Dacheng Tao and Xinghao Chen and Yunhe Wang},
  title     = {Pangu Embedded: An Efficient Dual-system LLM Reasoner with Metacognition},
  journal   = {arXiv preprint arXiv:2505.22375},
  year      = {2025},
}

@inproceedings{Alistarh2017QSGD,
  title     = {{QSGD}: Communication-Efficient SGD via Gradient Quantization and Encoding},
  author    = {Alistarh, Dan and Grubic, Demjan and Li, Jerry and Tomioka, Ryota and Vojnovic, Milan},
  booktitle = {Advances in Neural Information Processing Systems},
  volume    = {30},
  pages     = {1709--1720},
  year      = {2017},
}

@inproceedings{He2019MIACollabInference,
  author    = {Zecheng He and Tianwei Zhang and Ruby B. Lee},
  title     = {Model Inversion Attacks Against Collaborative Inference},
  booktitle = {Proceedings of the 35th Annual Computer Security Applications Conference (ACSAC '19)},
  series    = {ACM International Conference Proceeding Series},
  year      = {2019},
  pages     = {148--162},
  publisher = {Association for Computing Machinery},
  address   = {New York, NY, USA},
  location  = {San Juan, PR, USA},
  isbn      = {978-1-4503-7628-0},
  doi       = {10.1145/3359789.3359824},
}

@inproceedings{wu24ditto,
  title     = {Ditto: Quantization-aware Secure Inference of Transformers upon {MPC}},
  author    = {Wu, Haoqi and Fang, Wenjing and Zheng, Yancheng and Ma, Junming and Tan, Jin and Wang, Lei},
  booktitle = {Proceedings of the 41st International Conference on Machine Learning},
  pages     = {53346--53365},
  year      = {2024},
  volume    = {235},
  series    = {Proceedings of Machine Learning Research},
  month     = {21--27 Jul},
  publisher = {PMLR},
}

@inproceedings{zeng2025privacyrestore,
  title     = {PrivacyRestore: Privacy-Preserving Inference in Large Language Models via Privacy Removal and Restoration},
  author    = {Zeng, Ziqian and Wang, Jianwei and Yang, Junyao and Lu, Zhengdong and Li, Haoran and Zhuang, Huiping and Chen, Cen},
  booktitle = {Proceedings of the 63rd Annual Meeting of the Association for Computational Linguistics (Volume 1: Long Papers)},
  pages     = {10821--10855},
  year      = {2025},
  month     = jul,
  address   = {Vienna, Austria},
  publisher = {Association for Computational Linguistics},
  doi       = {10.18653/v1/2025.acl-long.532},
  isbn      = {979-8-89176-251-0}
}

@inproceedings{zhou2022textfusion,
  title     = {TextFusion: Privacy-Preserving Pre-trained Model Inference via Token Fusion},
  author    = {Zhou, Xin and Lu, Jinzhu and Gui, Tao and Ma, Ruotian and Fei, Zichu and Wang, Yuran and Ding, Yong and Cheung, Yibo and Zhang, Qi and Huang, Xuanjing},
  booktitle = {Proceedings of the 2022 Conference on Empirical Methods in Natural Language Processing},
  pages     = {8360--8371},
  year      = {2022},
  month     = dec,
  address   = {Abu Dhabi, United Arab Emirates},
  publisher = {Association for Computational Linguistics},
  doi       = {10.18653/v1/2022.emnlp-main.572}
}

@inproceedings{du2023dpforward,
  title     = {{DP-Forward}: Fine-tuning and Inference on Language Models with Differential Privacy in Forward Pass},
  author    = {Du, Minxin and Yue, Xiang and Chow, Sherman S. M. and Wang, Tianhao and Huang, Chenyu and Sun, Huan},
  booktitle = {Proceedings of the 2023 ACM SIGSAC Conference on Computer and Communications Security (CCS~'23)},
  pages     = {2665--2679},
  year      = {2023},
  address   = {Copenhagen, Denmark},
  publisher = {Association for Computing Machinery},
  doi       = {10.1145/3576915.3616592},
}
\bibliographystyle{icml2026}

%%%%%%%%%%%%%%%%%%%%%%%%%%%%%%%%%%%%%%%%%%%%%%%%%%%%%%%%%%%%%%%%%%%%%%%%%%%%%%%
%%%%%%%%%%%%%%%%%%%%%%%%%%%%%%%%%%%%%%%%%%%%%%%%%%%%%%%%%%%%%%%%%%%%%%%%%%%%%%%
% APPENDIX
%%%%%%%%%%%%%%%%%%%%%%%%%%%%%%%%%%%%%%%%%%%%%%%%%%%%%%%%%%%%%%%%%%%%%%%%%%%%%%%
%%%%%%%%%%%%%%%%%%%%%%%%%%%%%%%%%%%%%%%%%%%%%%%%%%%%%%%%%%%%%%%%%%%%%%%%%%%%%%%
\newpage
\appendix
\onecolumn

\section{Proof of Theorem \ref{theorem_sto}}
\begin{proof}
    First, we will prove the unbiasedness. The expected value of perturbed $v_{i,j}$ is given by:
\[
\mathbb{E}[\mathcal{M}^{\text{sto}}(v_{i,j}; A, n)] = \mathbb{E}\left[\frac{2K - (2^n - 1)}{2^n - 1} \cdot A \right]
= \frac{2A}{2^n - 1} \mathbb{E}[K] - A.
\]

Since $K$ follows a binomial distribution, we have:
\[
\mathbb{E}[K] = u \cdot p(v_{i,j}) = (2^n - 1) \cdot \frac{A + v_{i,j}}{2A}.
\]

Substituting this back into the expression for the expected value:
\[
\mathbb{E}[\mathcal{M}^{\text{sto}}(v_{i,j}; A, n)] = \frac{2A}{2^n - 1} \cdot (2^n - 1) \cdot \frac{A + v_{i,j}}{2A} - A
= v_{i,j}.
\]

$\operatorname{Var}(\mathcal{M}^{\text{sto}}(v_{i,j}; A, n))
=\left(\frac{2A}{u}\right)^2 \operatorname{Var}(K)
=\left(\frac{2A}{u}\right)^2 up(1-p)$.
Using $p(1-p)=\tfrac{A^2-v^2}{4A^2}$, we obtain the closed form
\begin{equation}\label{eq:mse-per-coord}
\operatorname{Var}(\mathcal{M}^{\text{sto}}(v_{i,j}; A, n))
=\frac{A^2-v^2}{2^n-1}.
\end{equation}

Applying the mechanism independently across coordinates for $\bm{v}_i$, the total variance is represented as
\begin{equation}\label{eq:mse-vector}
\operatorname{Var}(\mathcal{M}^{\text{sto}}(\bm{v}_{i}; A, n))
=\frac{dA^2-\|\bm{v}_i\|_2^2}{2^n-1}.
\end{equation}

% \begin{remark}
% Equation~\eqref{eq:mse-per-coord} shows the MSE decreases linearly in the number of quantization steps (u=2^n-1) (i.e., exponentially in (n)), is largest at (v=0), and diminishes as (|v|\to A). For fixed (n), choosing the smallest feasible scaling (A) (typically (A=c)) minimizes the distortion.
% \end{remark}

Next, we will prove the $\mu$-DP guarantee of the stochastic $n$-bit compressor. In fact, it can be viewed as the average of $2^n-1$ independent stochastic binary mechanisms applied coordinate-wise, which can be represented as 
\begin{equation}
\text{sto-binary}(v_{i,j};A)=\begin{cases}
    A,p=\frac{A+v_{i,j}}{2A},\\
    -A,p=\frac{A-v_{i,j}}{2A}.
\end{cases}
\end{equation}
Since the sum of $2^n-1$ such independent outputs can range from $-(2^n-1)$ to $2^n-1$ and it must be an odd multiple of $A$, the resulting average value matches the $2^n$ uniformly spaced quantization levels defined by the set $\mathcal{Q}_n$ in Definition \ref{define_sto}. In other words, once the DP guarantee of the stochastic binary mechanisms is obtained, we can derive that for the proposed stochastic $n$-bit compressor by utilizing the composition theorem \cite{dong2022gaussian}. In the following, we begin with the $f$-DP analysis of each coordinate $v_{i,j}$. Let $Y=\text{sto-binary}(v'_{i,j};A)$ and $X=\text{sto-binary}(v_{i,j};A)$, we first introduce the Neyman-Pearson Lemma \cite{lehmann2005testing}

\begin{lemma}(Neyman-Pearson Lemma \cite{lehmann2005testing})\label{Neyman-Pearson}
Let $P$ and $Q$ be probability distributions on $\Omega$ with densities $p$ and $q$, respectively. For the hypothesis testing problem $H_{0}: P$ vs $H_{1}: Q$, a test $\phi:\Omega \rightarrow [0,1]$ is the most powerful test at level $\alpha$ if and only if there are two constants $h \in [0,+\infty]$ and $\gamma \in [0,1]$ such that $\phi$ has the form 
\begin{equation}\label{rejectionrule}
\phi(x) =
\begin{cases}
\hfill 1, \hfill \text{if $\frac{q(x)}{p(x)} > h$},\\
\hfill \gamma, \hfill \text{if $\frac{q(x)}{p(x)} = h$},\\
\hfill 0, \hfill \text{if $\frac{q(x)}{p(x)} < h$},\\
\end{cases}
\end{equation}
and $\mathbb{E}_{P}[\phi] = \alpha$. The rejection rule suggests that $H_{0}$ is rejected with a probability of $\phi(x)$ given the observation $x$.
\end{lemma}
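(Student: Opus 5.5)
The plan is to prove the two directions separately. Both rest on the single pointwise inequality
\[
(\phi(x)-\phi'(x))\,\bigl(q(x)-h\,p(x)\bigr)\;\ge\;0 \quad\text{for all } x\in\Omega,
\]
which holds whenever $\phi$ has the form \eqref{rejectionrule} and $\phi'$ is any test with values in $[0,1]$. To check it, look at the sign of $q-hp$. Where $q>hp$ we have $\phi=1\ge\phi'$. Where $q<hp$ we have $\phi=0\le\phi'$. Where $q=hp$ the product vanishes. Points with $p(x)=0$ are handled by reading the ratio $q/p$ as $+\infty$ there. The case $h=+\infty$ is handled by the convention that $\phi$ rejects only on $\{p=0,\,q>0\}$, so the inequality is stated with $q-hp$ replaced by $p$ and the sign reversed.

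\textbf{Sufficiency.} Suppose $\phi$ has the form \eqref{rejectionrule} with $h<\infty$ and $\mathbb{E}_P[\phi]=\alpha$. Let $\phi'$ be any test with $\mathbb{E}_P[\phi']\le\alpha$. Integrating the pointwise inequality gives
\[
\mathbb{E}_Q[\phi]-\mathbb{E}_Q[\phi'] \;\ge\; h\bigl(\mathbb{E}_P[\phi]-\mathbb{E}_P[\phi']\bigr)\;\ge\;0 .
\]
So $\phi$ has power at least that of $\phi'$, i.e.\ $\beta_\phi\le\beta_{\phi'}$. The case $h=\infty$ is trivial: then $\phi$ rejects exactly where $p=0$, $q>0$, and no level-$\alpha$ test can do better there.

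\textbf{Existence of a test of this form at every level.} This is needed for necessity. Let $L=q/p$, and let $F(t)=P(L>t)$, which is nonincreasing and right-continuous. Set $h=\inf\{t: F(t)\le\alpha\}$. Then $P(L>h)\le\alpha\le P(L\ge h)$. Choose the randomization constant as
\[
\gamma=\frac{\alpha-P(L>h)}{P(L=h)},
\]
taking $\gamma=0$ if $P(L=h)=0$. This gives $\mathbb{E}_P[\phi]=\alpha$ exactly. The case $\alpha=0$ leads to $h=\infty$.

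\textbf{Necessity.} Let $\phi'$ be most powerful at level $\alpha$, and let $\phi$ be the test of the form \eqref{rejectionrule} constructed above. Equal power gives $\mathbb{E}_Q[\phi-\phi']=0$. The displayed integral inequality then forces two things:
\[
\int(\phi-\phi')(q-hp)\,d\nu=0 \qquad\text{and}\qquad h\bigl(\alpha-\mathbb{E}_P[\phi']\bigr)\le 0 .
\]
The integrand is nonnegative, so $\phi'=\phi$ $\nu$-a.e.\ on $\{q\ne hp\}$. This is exactly the form \eqref{rejectionrule} up to null sets, with an arbitrary value on the boundary set $\{q=hp\}$. On that boundary set I would relax the statement's constant $\gamma$ to a $[0,1]$-valued function; this is the standard reading of the lemma.

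If $h>0$, the second relation gives $\mathbb{E}_P[\phi']=\alpha$. If $h=0$, the test $\phi$ has power $1$, and then $\phi'$ might have size strictly below $\alpha$. This is the well-known exception to "size exactly $\alpha$", and I would note it as the caveat under which the stated equivalence holds.

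\textbf{Main obstacle.} I expect the delicate part to be this bookkeeping in the necessity direction, not any calculation. Three points need care: uniqueness holds only almost everywhere and only off $\{q=hp\}$; the boundary randomization may have to be a function rather than a constant; and the degenerate cases $h\in\{0,\infty\}$ together with $p=0$ must be treated separately so that the "if and only if" is stated correctly.
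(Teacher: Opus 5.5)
The paper gives no proof of this lemma; it simply cites Lehmann and Romano. Your argument (the pointwise inequality $(\phi-\phi')(q-hp)\ge 0$, the quantile construction of $(h,\gamma)$, and the equal-power argument for necessity) is exactly the standard textbook proof, and it is correct. Your caveats are genuinely needed, because the paper's literal ``if and only if'' is false as stated. A most powerful test need not use a constant $\gamma$ on $\{q=hp\}$. It need not have size exactly $\alpha$ when a power-one test of smaller size exists. And at $h=+\infty$ the ratio form puts $\{p=0,q>0\}$ in the randomization region, so with $\gamma<1$ and $Q(p=0)>0$ it is not most powerful; your convention that $\phi=1$ there is what makes sufficiency hold.

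The only slip is the remark that $\alpha=0$ leads to $h=+\infty$. The construction actually gives $h=\mathrm{ess\,sup}_P\,(q/p)$, which is finite whenever $q/p$ is $P$-essentially bounded. This does not affect the argument.
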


Given Lemma \ref{Neyman-Pearson}, the problem is then reduced to finding the corresponding $h$ and $\gamma$ such that the type I error rate $\alpha_{\phi} = \alpha$. 

If $v_{i,j}>v'_{i,j}$, when $\alpha\leq P(X=-1)$, we set $h=\frac{P(Y=-1)}{P(X=-1)}$. In this case, $\alpha=\gamma P(X=-1)$. We adjust $\gamma$ such that $\mathbb{E}_{P}[\phi] = \alpha$, which yields $\gamma=\frac{\alpha}{P(X=-1)}$ and
\begin{equation}
    \beta_\phi(\alpha)=1-\mathbb{E}_{Q}[\phi]=1-\gamma P(Y=-1)=1-\frac{P(Y=-1)}{P(X=-1)}\alpha.
\end{equation}
When $P(X=-1)<\alpha\leq 1$, we set $h=\frac{P(Y=1)}{P(X=1)}$, and $\frac{P(Y=k')}{P(X=k')}>h$ for $k'<1$. In this case, $\alpha=\gamma P(X=1)+P(X=-1)$. To satisfy $\mathbb{E}_{P}[\phi] = \alpha$, we derive $\gamma=\frac{\alpha-P(X=-1)}{P(X=1)}$, and
\begin{equation}
    \beta_\phi(\alpha)=1-\mathbb{E}_{Q}[\phi]=1-\gamma P(Y=1)-P(Y=-1)=\frac{P(Y=1)}{P(X=1)}-\frac{P(Y=1)}{P(X=1)}\alpha.
\end{equation}
Combining the above two cases, we obtain
\begin{equation}
    \beta_\phi(\alpha)=\begin{cases}
        1-\frac{P(Y=-1)}{P(X=-1)}\alpha, &\text{for}~~ \alpha\leq P(X=-1),\\
        \frac{P(Y=1)}{P(X=1)}-\frac{P(Y=1)}{P(X=1)}\alpha, &\text{for}~~ P(X=-1)<\alpha\leq 1.
    \end{cases}
\end{equation}

Similarly, if $v_{i,j}<v'_{i,j}$, we can obtain
\begin{equation}
    \beta_\phi(\alpha)=\begin{cases}
        1-\frac{P(Y=1)}{P(X=1)}\alpha, &\text{for}~~ \alpha\leq P(X=1),\\
        \frac{P(Y=-1)}{P(X=-1)}-\frac{P(Y=-1)}{P(X=-1)}\alpha, &\text{for}~~ P(X=1)<\alpha\leq 1.
    \end{cases}
\end{equation}
$\beta_\phi(\alpha)$ attains its minimum when $P(Y=-1)=\frac{A+c}{2A}$ and $P(X=-1)=\frac{A-c}{2A}$ for $v_{i,j}>v'_{i,j}$, and $P(Y=-1)=\frac{A-c}{2A}$ and $P(X=-1)=\frac{A+c}{2A}$ for $v_{i,j}<v'_{i,j}$. As a result, we have
\begin{equation}\label{f_of_scalar}
    f(\alpha)=\begin{cases}
        1-\frac{A+c}{A-c}\alpha, \text{for}~~\alpha\in[0,\frac{A-c}{2A}],\\
        \frac{A-c}{A+c}-\frac{A-c}{A+c}\alpha, \text{for}~~\alpha\in[\frac{A-c}{2A},1].
    \end{cases}
\end{equation}

Next, we extend the scalar case to the vector case. We first introduce the following Lemma

\begin{lemma}[\cite{dong2022gaussian}]\label{cltfdp}
Define the following functions
\begin{equation}
    \text{kl}(f) = -\int_{0}^{1}\log|f'(x)|dx,
\end{equation}
\begin{equation}
    \kappa_{2}(f) = \int_{0}^{1}\log^{2}|f'(x)|dx,
\end{equation}
\begin{equation}
    \kappa_{3}(f) = \int_{0}^{1}|\log|f'(x)||^3dx,
\end{equation}
\begin{equation}
    \bar{\kappa}_{3}(f) = \int_{0}^{1}|\log|f'(x)|+\text{kl}(f)|^3dx,
\end{equation}
let $f_{1},...,f_{n}$ be symmetric trade-off functions such that $\kappa_{3}(f_{i}) < \infty$ for all $1 \leq i \leq n$. Denote
\begin{equation}\nonumber
\mu = \frac{2||\text{kl}||_{1}}{\sqrt{||\kappa_{2}||_{1}-||\text{kl}||_{2}^{2}}}, \text{and~~} \gamma = \frac{0.56||\bar{\kappa}_{3}||_{1}}{(||\kappa_{2}||_{1}-||\text{kl}||_{2}^{2})^{3/2}},
\end{equation}
and assume $\gamma < \frac{1}{2}$. Then 
% Let $f_{1},...,f_{n}$ be symmetric trade-off functions, for all $\alpha \in [\gamma, 1-\gamma]$, we have
\begin{equation}
    G_{\mu}(\alpha+\gamma)-\gamma \leq f_{1}\otimes f_{2}\otimes\cdots \otimes f_{d}(\alpha) \leq G_{\mu}(\alpha-\gamma)+\gamma.
\end{equation}
% where $\mu$ and $\gamma$ are the parameters defined in Theorem 3.4 of \cite{dong2021gaussian}.
\end{lemma}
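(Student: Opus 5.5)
The plan is to reduce the composed trade-off function to a statement about sums of independent log-likelihood ratios, and then apply the classical Berry--Esseen theorem under both hypotheses.

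\textbf{Step 1: canonical representation.} For each symmetric trade-off function $f_i$, I realize it as $f_i=T(P_i,Q_i)$ with $P_i$ the uniform distribution on $[0,1]$ and $Q_i$ having density $|f_i'(1-x)|$. This is the standard construction of \cite{dong2022gaussian}: $f_i$ is convex and nonincreasing, and rejecting on $[1-\alpha,1]$ gives type II error $f_i(\alpha)$.

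Set $L_i=\log\frac{dQ_i}{dP_i}$. Under $P_i$ this random variable has:
\begin{itemize}
\item mean $-\text{kl}(f_i)$,
\item second moment $\kappa_2(f_i)$, hence variance $\kappa_2(f_i)-\text{kl}(f_i)^2$,
\item centered third absolute moment $\bar\kappa_3(f_i)$.
\end{itemize}
Symmetry of $f_i$ ($f_i=f_i^{-1}$) means $T(Q_i,P_i)=T(P_i,Q_i)$. From this I will derive that the law of $L_i$ under $Q_i$ is the law of $-L_i$ under $P_i$. Consequently, under $Q_i$ the variable $L_i$ has mean $+\text{kl}(f_i)$ and the same variance and centered third moment.

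\textbf{Step 2: composition as a threshold test.} By definition, $f_1\otimes\cdots\otimes f_d=T(P_1\times\cdots\times P_d,\;Q_1\times\cdots\times Q_d)$. The log-likelihood ratio of the product measures is $S=\sum_i L_i$, a sum of independent terms.

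By the Neyman--Pearson lemma (Lemma~\ref{Neyman-Pearson}), the optimal tests reject when $S>t$, with randomization on the boundary. Write $m=\|\text{kl}\|_1$ and $s^2=\|\kappa_2\|_1-\|\text{kl}\|_2^2$. Then:
\begin{itemize}
\item under $P$, $(S+m)/s$ has mean $0$ and variance $1$;
\item under $Q$, $(S-m)/s$ has mean $0$ and variance $1$.
\end{itemize}

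\textbf{Step 3: Berry--Esseen on both sides.} Berry--Esseen with constant $0.56$ gives Kolmogorov distance at most $0.56\|\bar\kappa_3\|_1/s^3=\gamma$ between each standardized sum and $\mathcal N(0,1)$. Hence, for every $t$:
\begin{align*}
\bigl|\,P(S>t)-\bar\Phi\bigl((t+m)/s\bigr)\bigr| &\le \gamma,\\
\bigl|\,Q(S\le t)-\Phi\bigl((t-m)/s\bigr)\bigr| &\le \gamma.
\end{align*}
The first inequality controls the type I error $\alpha$ and the second the type II error $\beta$.

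\textbf{Step 4: eliminate the threshold.} If the normal approximations were exact, eliminating $t$ would give
\[
\beta=\Phi\bigl(\Phi^{-1}(1-\alpha)-2m/s\bigr)=G_\mu(\alpha),\qquad \mu=2m/s.
\]
With the errors of Step 3, fix a threshold test of level $\alpha$. Its Gaussian-proxy level lies in $[\alpha-\gamma,\alpha+\gamma]$. Since $G_\mu$ is decreasing, its type II error then lies in
\[
\bigl[\,G_\mu(\alpha+\gamma)-\gamma,\;G_\mu(\alpha-\gamma)+\gamma\,\bigr].
\]
Optimizing over $t$, and over randomization at the boundary atoms, transfers these bounds to the composed trade-off function. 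Outside the range where the arguments make sense, I read $G_\mu$ with the conventions $G_\mu(x)=1$ for $x<0$ and $G_\mu(x)=0$ for $x>1$. The assumption $\gamma<\frac12$ ensures the sandwich is nontrivial.

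\textbf{Main obstacle.} I expect the symmetry step to be the hardest: proving rigorously that symmetry of $f_i$ forces $L_i$ under $Q_i$ to be distributed as $-L_i$ under $P_i$. My first attempt will be a change of variables $x\mapsto 1-f_i(x)$ in the canonical representation. A secondary difficulty is handling atoms of $S$, where randomized tests interpolate. I would treat these using monotonicity and continuity of $G_\mu$, so that boundary randomization only moves $\alpha$ and $\beta$ within the same $\gamma$-bands.
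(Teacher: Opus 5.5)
The paper gives no proof of its own here (it imports the result from \cite{dong2022gaussian}), and your argument is correct and is essentially the Berry--Esseen proof in that source: uniform canonical representation, reflected law of the log-likelihood ratio under $Q_i$, Neyman--Pearson thresholding of $S=\sum_i L_i$, Berry--Esseen with constant $0.56$ under both hypotheses, then eliminating the threshold; the change of variables you propose does settle your ``main obstacle'', since $Y\sim Q_i$ is distributed as $1-f_i(U)$ with $U$ uniform, and $f_i\circ f_i=\mathrm{id}$ gives $|f_i'(f_i(u))|\,|f_i'(u)|=1$ off a countable set. Do state explicitly that symmetry plus $\kappa_3(f_i)<\infty$ forces $f_i(0)=1$ (if $f_i(0)=1-\delta<1$, symmetry gives $f_i\equiv 0$ on $[1-\delta,1]$, so $\log|f_i'|=-\infty$ on a set of positive measure), which is what makes $|f_i'(1-x)|$ a probability density; with your conventions for $G_\mu$ outside $[0,1]$, the sandwich then holds for every $\alpha$, whereas the source states it only for $\alpha\in[\gamma,1-\gamma]$.
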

Given $f_i(\alpha)$ in (\ref{f_of_scalar}), we have
\begin{equation}
    \text{kl}(f) = \frac{c}{A}\log(\frac{A+c}{A-c})
\end{equation}
\begin{equation}
    \kappa_{2}(f) =\log^2(\frac{A+c}{A-c})
\end{equation}
\begin{equation}
    \kappa_{3}(f) = ||\log(\frac{A+c}{A-c})||^3,
\end{equation}
\begin{equation}
    \bar{\kappa}_{3}(f) = \left[\frac{A-c}{2A}\left|1+\frac{c}{A}\right|^3+\frac{A+c}{2A}\left|1-\frac{c}{A}\right|^3\right]\left|\log\left(\frac{A+c}{A-c}\right)\right|^3 .
\end{equation}

Utilizing the composition theorem above, the proposed $n$-bit stochastic quantization can be implemented as $(2^n-1)$ independent stochastic binary mechanisms per coordinate, applied independently across $d$ coordinates. Hence the proposed mechanism is the $(2^n-1)d$-fold composition of the trade-off functions in (\ref{f_of_scalar}) and the corresponding $\mu$ and $\gamma$ are given as follows
\begin{equation}
\mu = \frac{2\sqrt{(2^n-1)d}c}{\sqrt{A^2-c^2}},
\end{equation}

\begin{equation}
\gamma = \frac{0.56\left[\frac{A-c}{2A}\left|1+\frac{c}{A}\right|^3+\frac{A+c}{2A}\left|1-\frac{c}{A}\right|^3\right]}{(1-\frac{c^2}{A^2})^{3/2}\sqrt{(2^n-1)d}},
\end{equation}
which completes the proof.

% In the case of large $d$, which is typical for high-dimensional embeddings in LLMs, $f(\alpha)$ approximates $\mu$.

% Finally, due to the following Lemma on the $n$-fold composition of $\mu$-GDP:

% \begin{lemma}
%     The $n$-fold composition of $\mu_i$-GDP mechanisms is $\sqrt{\mu_1^2 + \dots + \mu_n^2}$-GDP.
% \end{lemma}

% we derive that the stochastic $n$-bit compressor satisfies $\frac{2\sqrt{(2^n-1)d}c}{\sqrt{A^2-c^2}}$-GDP, as it is the $2^n-1$-fold composition of the stochastic binary mechanism, which completes the proof.

\end{proof}

\section{Experimental Details}
\subsection{Metrics}\label{metric_app}
Perplexity is a standard intrinsic metric for language modeling, which measures how well the model predicts a given sequence. 
Formally, for a token sequence $\bm{s}=[s_1,\dots,s_T]$, the perplexity is defined as
\begin{equation}
    \mathrm{PPL}(\bm{s})=\exp\left(-\frac{1}{T}\sum_{t=1}^{T}\log p(s_t|s_{<t})\right),
\end{equation}
where $p(s_t|s_{<t})$ denotes the probability assigned by the model to the next token $s_t$ given the prefix $s_{<t}$. 
Lower perplexity indicates that the model assigns a higher likelihood to the ground-truth sequence, thus reflecting stronger language modeling ability.  

Coherence evaluates the semantic consistency between the target text $Doc$ and the generated output text of LLM $Gen$ by computing the cosine similarity of their embeddings:
\begin{equation}
    COH(Doc,Gen)=\frac{\text{SimCSE}(Doc)\cdot \text{SimCSE}(Gen)}{\|\text{SimCSE}(Doc)\|\cdot \|\text{SimCSE}(Gen)\|},
\end{equation}
where $\text{SimCSE}(\cdot)$ denotes the pretrained sentence embedding model \cite{gao-etal-2021-simcse}.

To evaluate the privacy efficacy of the DP mechanisms, we conduct experiments on both the WikiText-2 and PTB datasets using Llama3 and Qwen2.5. We report the Attack Success Rate for both the embedding inversion attack and the input inference attack. The results are illustrated in Figure \ref{fig:asr_cmp}. As observed, the embedding inversion attack consistently outperforms the input inference attack across all scenarios. This observation aligns with the results reported in InferDPT \cite{tong2025inferdpt}. Consequently, we primarily report the ASR of the embedding inversion attack in the main text.

\begin{figure*}[h]
    \centering
    % Subfigure 1
    % \begin{subfigure}[b]{0.3\textwidth}
    %     \centering
    %     \includegraphics[width=\textwidth]{samples/comparison_plot_llama_ptb.pdf}
    % \end{subfigure}
    % \hfill
    % Subfigure 2

    % \hfill
    % Subfigure 3
    \begin{subfigure}[b]{0.3\textwidth}
        \centering
        \includegraphics[width=\textwidth]{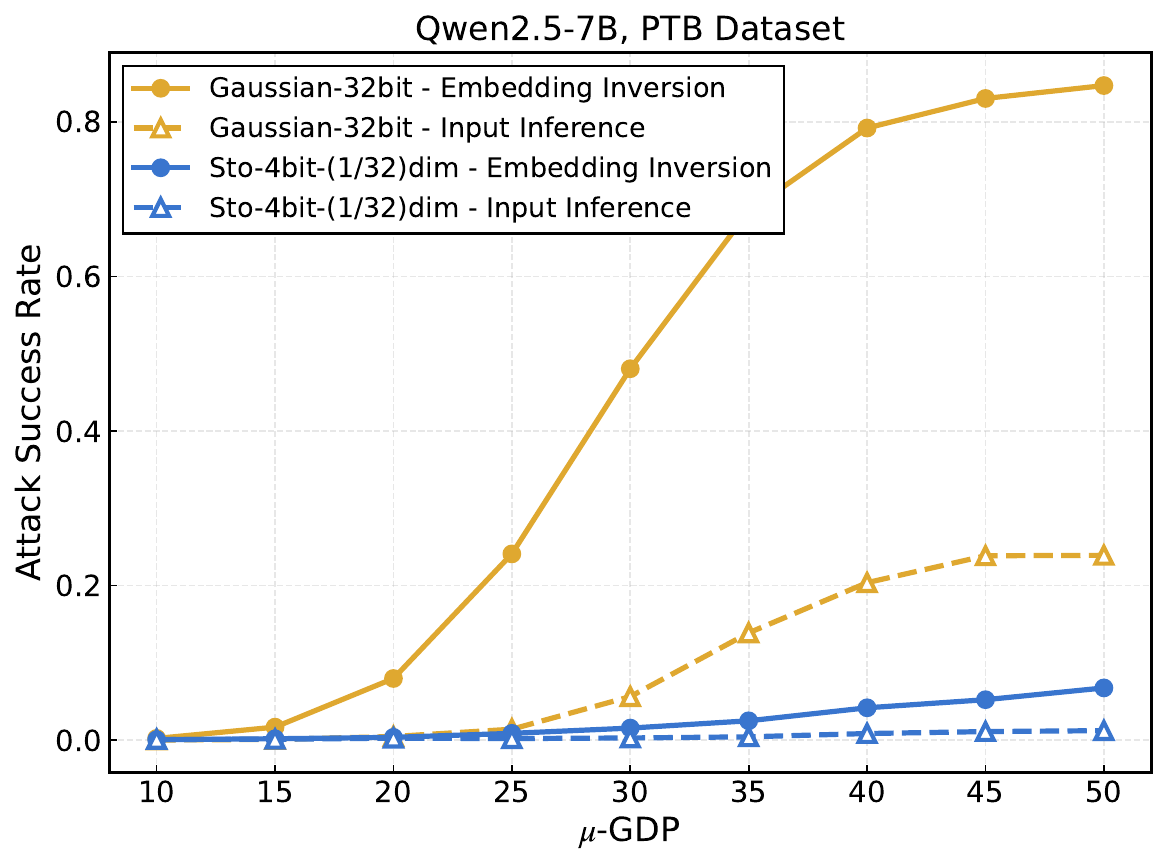}
    \end{subfigure}
    % % \hfill
    % Subfigure 4
    \begin{subfigure}[b]{0.3\textwidth}
        \centering
        \includegraphics[width=\textwidth]{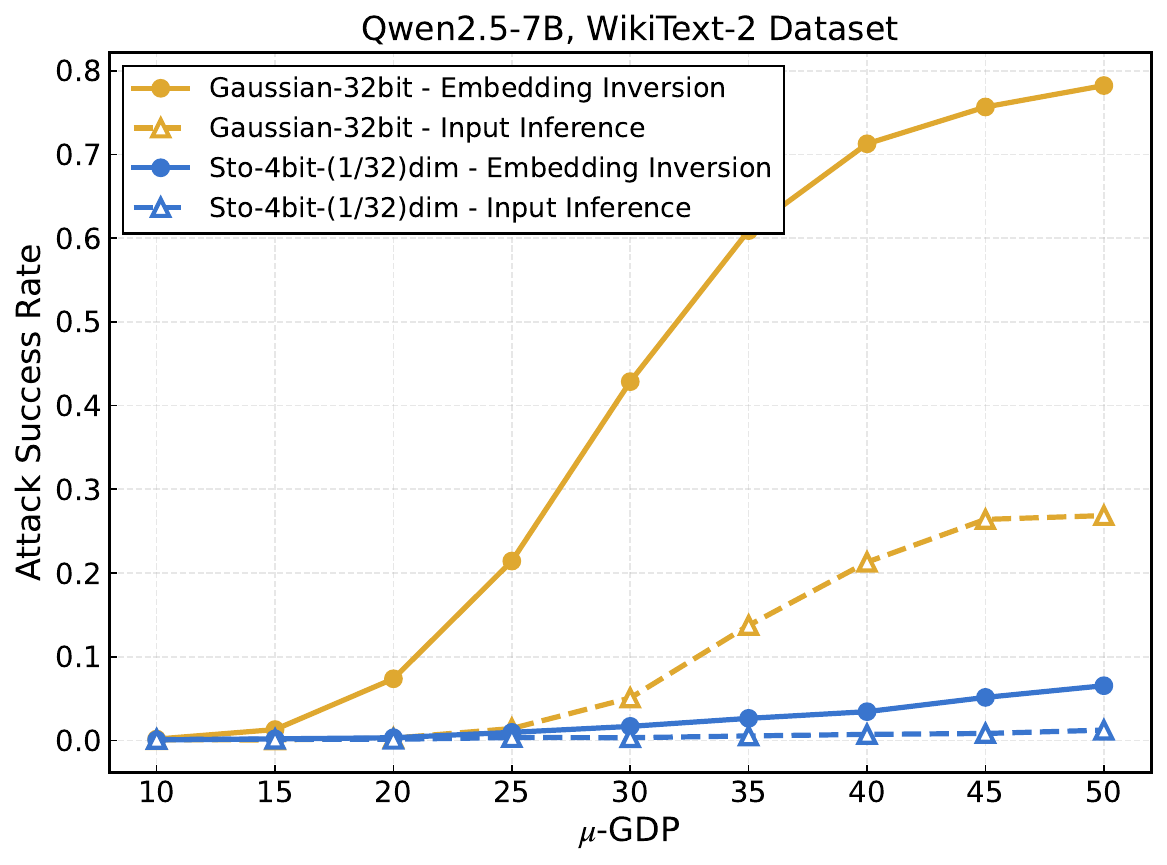}
    \end{subfigure}
        \begin{subfigure}[b]{0.3\textwidth}
        \centering
        \includegraphics[width=\textwidth]{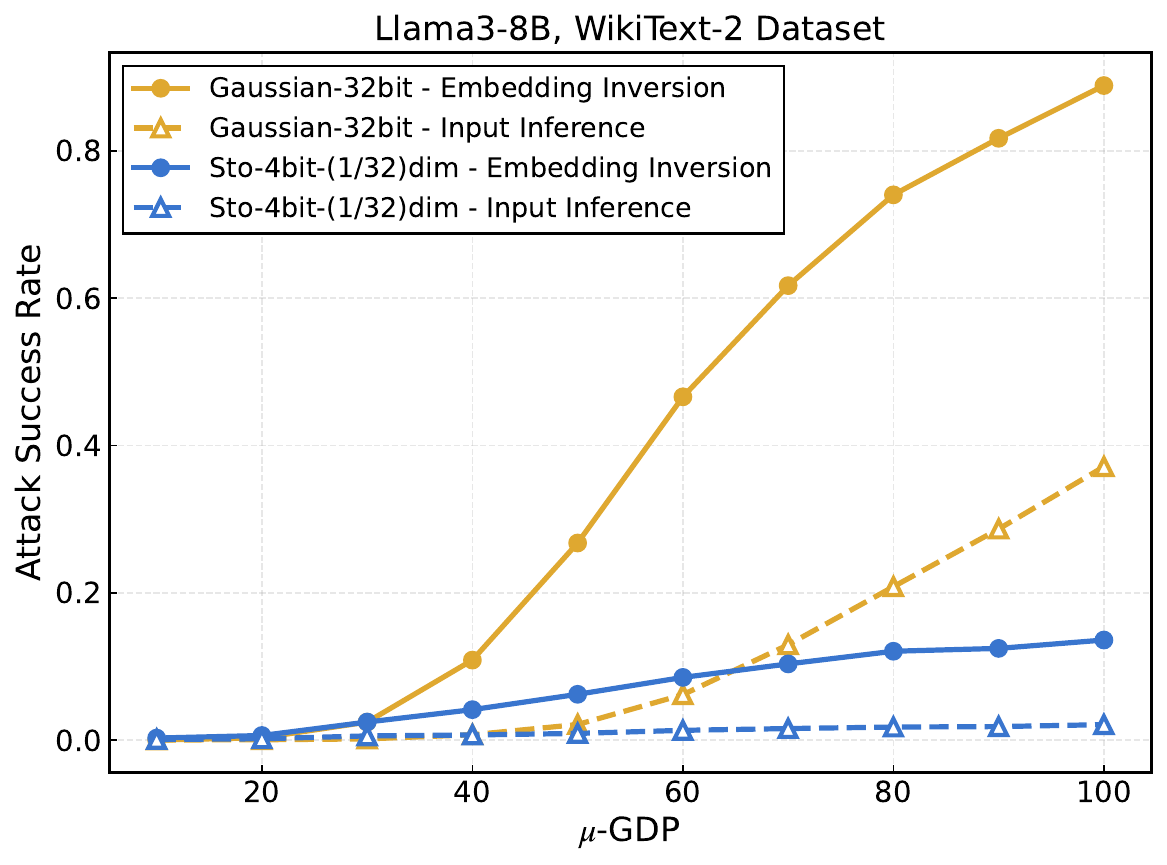}
    \end{subfigure}
    
    \caption{Comparison of Attack Success Rates (ASR) between embedding inversion and input inference attacks across different LLMs, datasets, and DP mechanisms.}
    \label{fig:asr_cmp}
    \vspace{-0.2in}
\end{figure*}

\subsection{Local Instruction Prompt for Extraction in InferDPT}\label{inferspt_prompt}
In InferDPT~\cite{tong2025inferdpt}, the original user input and the noisy output generated by the LLM are fed into a locally deployed lightweight LLM, together with an instruction prompt for local information extraction, as shown below:

\begin{quote}
Your task is to extend the ``Prefix Text''. Use the ``Perturbed Generation'' as your primary writing material for the extension. Extract coherent and consistent text from the ``Perturbed Generation'' and integrate it into your continuation. Ensure seamless alignment with the context established by the ``Prefix Text''. Provide only your ``Extended Text''.\\
--- ``Prefix Text'': \\
--- ``Perturbed Generation'': \\
--- ``Extended Text'':
\end{quote}

\subsection{Details of SnD}\label{snd_detail}
\begin{figure*}[t]
    \centering
    \includegraphics[width=0.8\linewidth]{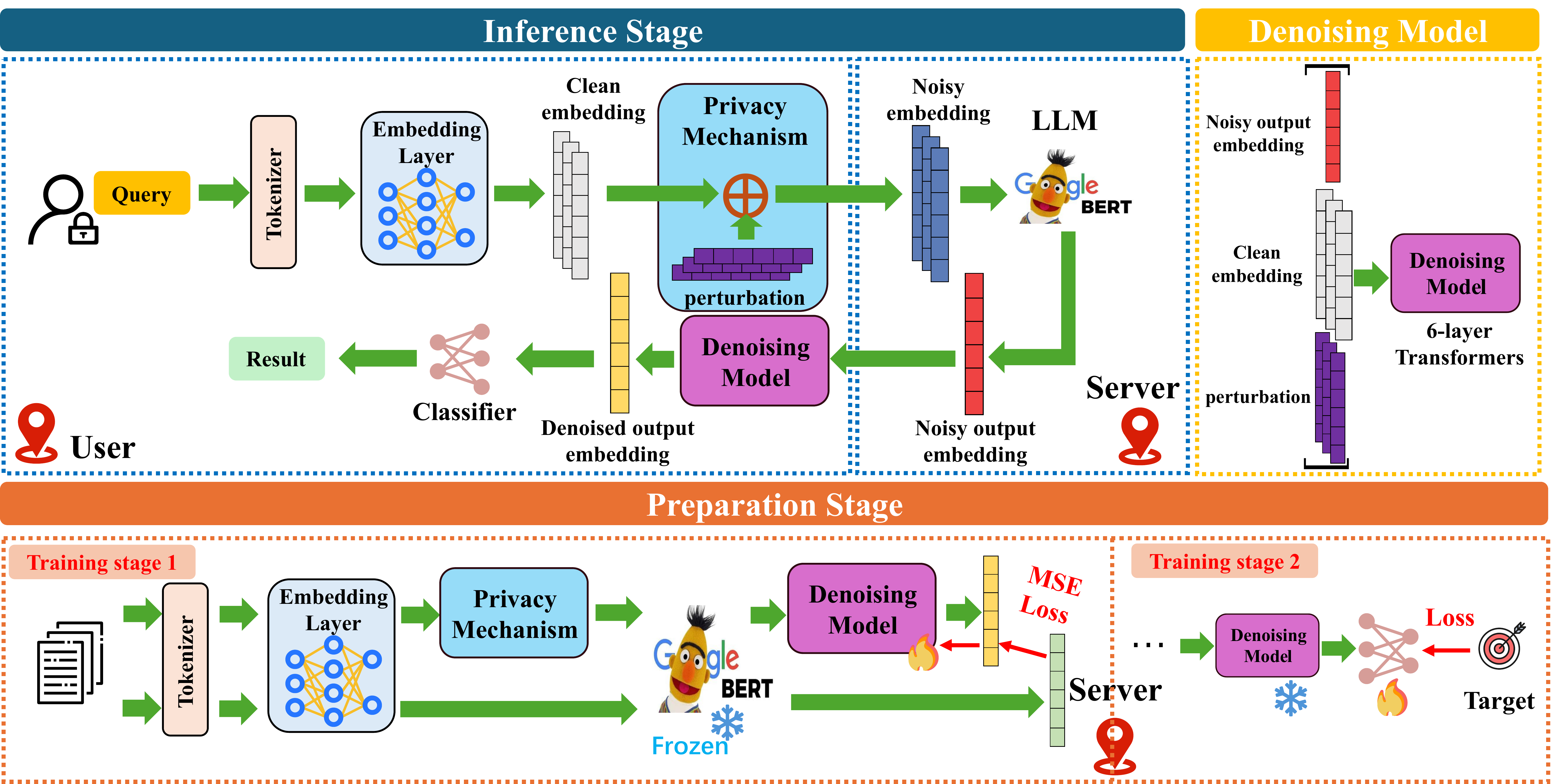}
    % \vspace{-0.2in}
    \caption{An overview of the SnD framework \cite{mai2024split} for NLU tasks.}
    \label{fig:snd}
    \vspace{-0.2in}
\end{figure*}

In SnD~\cite{mai2024split}, as illustrated in Figure~\ref{fig:snd}, the dataset is first converted into token embeddings, which are then perturbed by a local privacy mechanism before transmission.
The resulting noisy embeddings are fed into the LLM (e.g., BERT), which produces noisy output embeddings. These are subsequently denoised by a dedicated denoising model to obtain clean output representations.
For natural language understanding tasks, the denoised output embeddings of each evaluation dataset are used to train a fully connected classifier for downstream classification, where higher accuracy or AUC indicates better utility of the denoised embeddings.
Specifically, as shown in the Training stage 1 of Figure~\ref{fig:snd}, we follow SnD~\cite{mai2024split} to train a denoising model composed of six Transformer layers on the server, which aims to reconstruct the target output embeddings of the LLM given the perturbed input embeddings produced by the privacy mechanism.
The input to the denoising model is the concatenation of the clean token embeddings, the perturbation vector generated by the privacy mechanism, and the noisy output embeddings.
The model parameters are optimized using the Adam optimizer to minimize the mean squared error (MSE) between the denoised and target output embeddings.
% To evaluate the performance on downstream NLU tasks, the denoiser is fixed and shared with the user after training. Afterwards, each evaluation dataset (e.g., MRPC and QQP) is first processed through the privacy mechanism, the LLM server, and the local denoiser to obtain the denoised embeddings. Subsequently, a lightweight fully connected classifier is trained locally on these denoised embeddings to evaluate the accuracy or AUC of the classifier on downstream NLU tasks. 
%%%%%%%%%%%%%%%%%%%%%%%%%%%%%%%%%%%%%%%%%%%%%%%%%%%%%%%%%%%%%%%%%%%%%%%%%%%%%%%
%%%%%%%%%%%%%%%%%%%%%%%%%%%%%%%%%%%%%%%%%%%%%%%%%%%%%%%%%%%%%%%%%%%%%%%%%%%%%%%

\subsection{Additional Experimental Results}
Experimental results on the CNN/Daily Mail dataset are presented in Table \ref{tab:results_cnndm}, demonstrating that the proposed mechanism achieves a superior privacy-utility trade-off compared to InferDPT. Figure \ref{samples1} and \ref{samples2} provide qualitative examples of perturbed inputs and their corresponding outputs. The remote LLM is Qwen2.5-7B, and the local model is the Llama2-7B-4bit as adopted in InferDPT \cite{tong2025inferdpt}. As illustrated, InferDPT often causes the remote LLM to generate unintelligible text, forcing the system to rely heavily on the local model for response generation. However, the perturbations in the remote output occasionally degrade the local model's performance, resulting in garbled or incoherent sequences. In contrast, our proposed method leverages soft prompt to mitigate the utility loss of the remote LLM, enabling it to better capture the semantic essence of the user's input and restore the capabilities. 
% While the generated output may not perfectly match the reference text—partly due to the inherent stochasticity of LLMs even with clean inputs—it remains highly coherent and semantically consistent with the original input. This effectively restores the remote LLM's capabilities and underscores the performance advantages of our approach over InferDPT.

\begin{table*}[!htbp]
\centering
\caption{Comparison with the CNN/Daily Mail dataset between the proposed method and InferDPT under different ASRs. }
\label{tab:results_cnndm}
\scriptsize
\setlength{\tabcolsep}{3pt}
\renewcommand{\arraystretch}{1.15}
\begin{tabular}{l|cccc|c|ccc}
\hline
\multirow{2}{*}{Method} & \multicolumn{4}{c|}{\textbf{InferDPT}} & 
\multirow{2}{*}{\makecell{\textbf{DEL}}} &
\multicolumn{3}{c}{\textbf{DEL + Local Model}} \\
\cline{2-5} \cline{7-9}
 & \makecell{w/o local \\ model} & \makecell{with local \\ DialoGPT} & \makecell{with local \\ OPT-1.3B} & \makecell{with local \\ Llama2-4bit} &  & DialoGPT & OPT-1.3B & Llama2-4bit \\
\hline
\multicolumn{9}{c}{\textbf{Llama3-8B + CNN/Daily Mail}} \\
\hline
ASR = 0.02 & 0.506 & 0.523 & 0.608 & 0.690 & 0.600 & 0.527 & 0.578 & \textbf{0.741} \\
ASR = 0.10 & 0.508 & 0.518 & 0.601 & 0.693 & 0.740 & 0.526 & 0.659 & \textbf{0.760} \\
ASR = 0.15 & 0.513 & 0.522 & 0.605 & 0.698 & \textbf{0.771} & 0.526 & 0.659 & 0.769 \\
ASR = 0.20 & 0.511 & 0.525 & 0.611 & 0.705 & \textbf{0.794} & 0.526 & 0.656 & 0.787 \\
\hline
\multicolumn{9}{c}{\textbf{Qwen2.5-7B + CNN/Daily Mail}} \\
\hline
ASR = 0.02 & 0.497 & 0.518 & 0.598 & 0.712 & 0.651 & 0.525 & 0.607 & \textbf{0.727} \\
ASR = 0.10 & 0.497 & 0.517 & 0.590 & 0.734 & \textbf{0.790} & 0.527 & 0.673 & 0.784 \\
ASR = 0.15 & 0.507 & 0.520 & 0.574 & 0.731 & \textbf{0.817} & 0.526 & 0.667 & 0.804 \\
ASR = 0.20 & 0.504 & 0.519 & 0.597 & 0.730 & \textbf{0.830} & 0.526 & 0.678 & 0.808 \\
\hline
\end{tabular}
% \vspace{-0.1in}
\end{table*}

\begin{figure}[htbp]
    \centering
    % \includegraphics[width=0.95\textwidth]{samples/sample3.pdf} \\
    % \vspace{0.1cm}
    
    \includegraphics[width=0.95\textwidth]{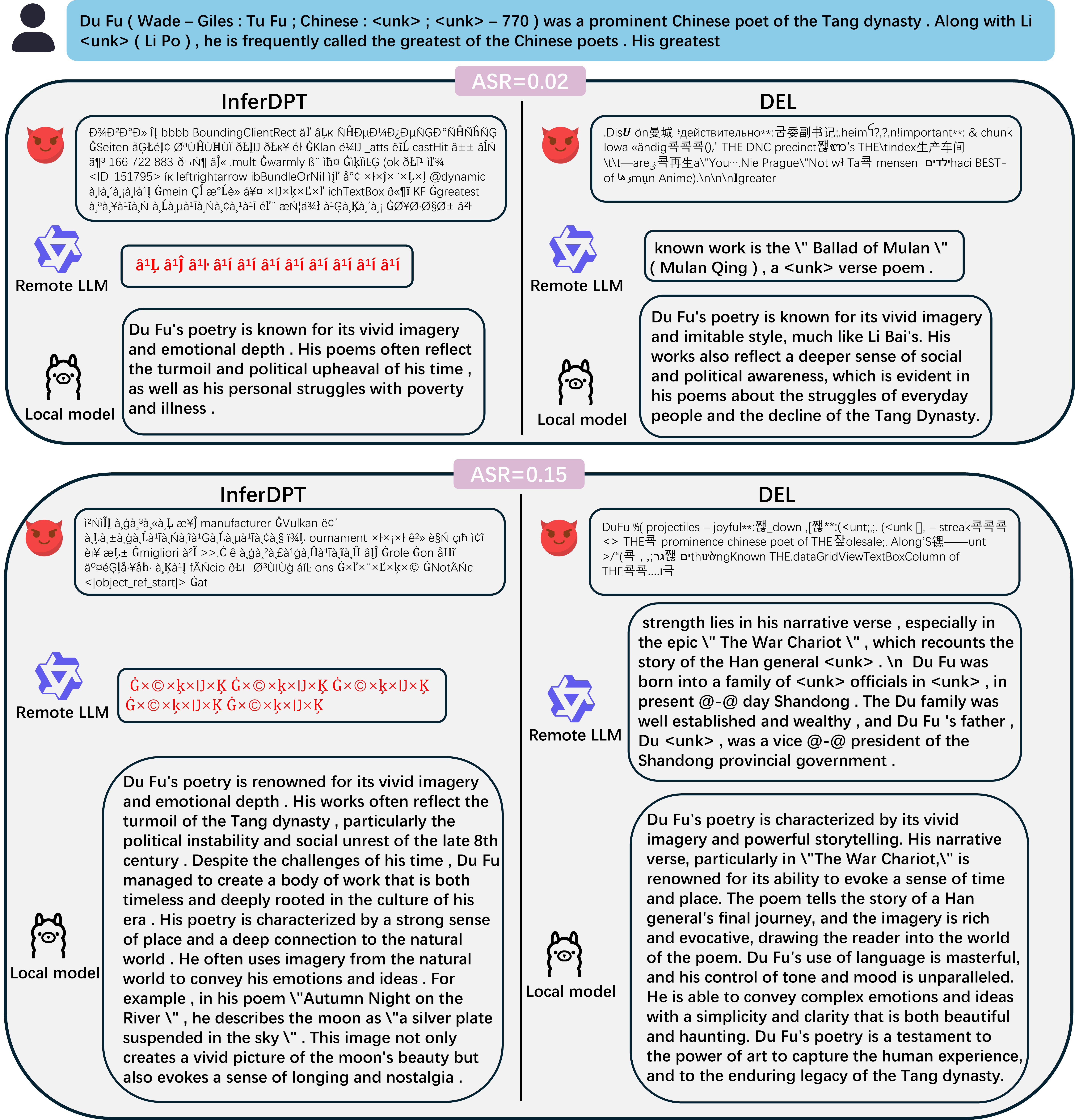}
    % \vspace{-0.1cm}
    \caption{Comparison between the proposed method and InferDPT on perturbed inputs and LLM generation outputs.}
    \label{samples2}
\end{figure}

\begin{figure}[t]
    \centering
    
    % \includegraphics[width=0.95\textwidth]{samples/sample1.pdf} \\
    % \vspace{0.1cm} % 调整间距
    
    \includegraphics[width=0.95\textwidth]{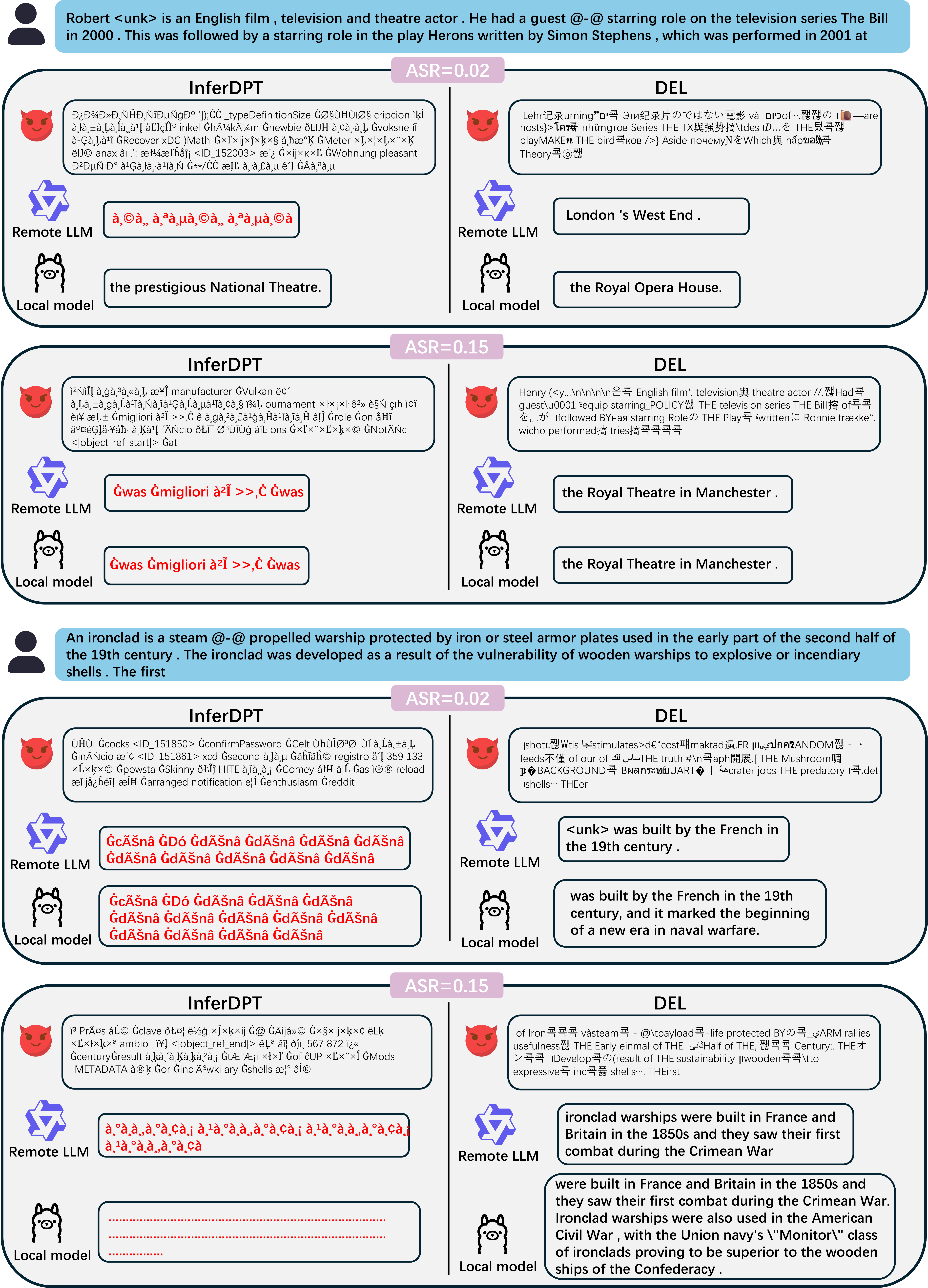} \\
\caption{Comparison between the proposed method and InferDPT on perturbed inputs and LLM generation outputs.}
\label{samples1}
\end{figure}

\end{document}